\newcommand{\version}{long} 
\newtheorem{definition}{Definition}
\newtheorem{corollary}{Corollary}
\newtheorem{theorem}{Theorem}
\newcommand{\cond}[2]{\ifthenelse{\equal{\version}{#1}}{#2}{}}
\begin{document}
\title{Non-Asymptotic Delay Bounds for $(k,l)$ Fork-Join Systems and Multi-Stage Fork-Join Networks}
\author{\IEEEauthorblockN{Markus Fidler\thanks{This work was supported
in part by the European Research Council (ERC) under Starting Grant UnIQue (StG 306644).}}
\IEEEauthorblockA{Institute of Communications Technology\\Leibniz Universit\"{a}t Hannover}
\and
\IEEEauthorblockN{Yuming Jiang}
\IEEEauthorblockA{Department of Telematics\\NTNU Trondheim}}
\maketitle
\cond{long}{
\thispagestyle{plain}
\pagestyle{plain}
}
\begin{abstract}
Parallel systems have received increasing attention with numerous recent applications such as fork-join systems, load-balancing, and $l$-out-of-$k$ redundancy. Common to these systems is a join or resequencing stage, where tasks that have finished service may have to wait for the completion of other tasks so that they leave the system in a predefined order. These synchronization constraints make the analysis of parallel systems challenging and few explicit results are known. In this work, we model parallel systems using a max-plus approach that enables us to derive statistical bounds of waiting and sojourn times. Taking advantage of max-plus system theory, we also show end-to-end delay bounds for multi-stage fork-join networks. We contribute solutions for basic G$\mid$G$\mid$1 fork-join systems, parallel systems with load-balancing, as well as general $(k,l)$ fork-join systems with redundancy. Our results provide insights into the respective advantages of $l$-out-of-$k$ redundancy vs. load-balancing.
\end{abstract}
%
%
\section{Introduction}
\label{sec:introduction}
Fork-join systems are an essential model of parallel data processing, such as Hadoop MapReduce~\cite{tan:mapreduce}, where jobs are divided into $k$ tasks (fork) that are processed in parallel by $k$ servers. Once all tasks of a job are completed, the results are combined (join) and the job leaves the system. Multi-stage fork-join networks comprise several fork-join systems in tandem, where all tasks of a job have to be completed at the current stage before the job is handed over to the next stage. The difficulty in analyzing such systems is due to a) the statistical dependence of the workload of the parallel servers that is due to the common arrival process~\cite{baccelli:forkjoin,kemper:forkjoin}, and b) the synchronization that is enforced by the join operation~\cite{baccelli:forkjoin}.

Significant research has been performed to analyze the performance of fork-join systems. Exact results are, however, known for few specific systems only, such as for two parallel M$\mid$M$\mid$1 queues~\cite{flatto:forkjoin, nelson:forkjoin}. For more complex systems, approximation techniques, e.g.,~\cite{nelson:forkjoin, lebrecht:forkjoin, ko:forkjoin, tan:forkjoin, varma:forkjoin, varki:forkjoin, kemper:forkjoin, alomari:forkjoin}, and bounds, using stochastic orderings~\cite{baccelli:forkjoin}, martingales~\cite{rizk:forkjoin}, or stochastical burstiness constraints~\cite{kesidis:forkjoin}, have been explored. Given the difficulties posed by single-stage fork-join systems, few works consider multi-stage networks. A notable exception is~\cite{varki:forkjoin} where an approximation for closed fork-join networks is developed.

Related synchronization problems occur also in case of load balancing using parallel servers and in case of multi-path routing of packet data streams~\cite{han:resequencing} using multi-path protocols~\cite{rizk:forkjoin}. The tail behavior of delays in multi-path routing is investigated in~\cite{han:resequencing} as well as in~\cite{xia:resequencing,gao:resequencing} where large deviation results of resequencing delays for parallel M$\mid$M$\mid$1 servers are derived.

A further synchronization constraint applies in split-merge systems, that are a variant of fork-join systems, where all tasks of a job have to start execution at the same time. In contrast, in a fork-join system, the start time of tasks is not synchronized. Split-merge systems are solvable to some extend as they can be expressed as a single server queue, where the service process is governed by the maximal service time~\cite{harrison:splitmerge, lebrecht:forkjoin, rizk:forkjoin, joshi:knforkjoin}.

A generalization of fork-join systems are $(k,l)$ fork-join systems, where a job is finished once any $l$ out of its $k$ tasks are completed. The model enables analyzing systems with redundant servers and suitable coding. A first queueing model of maximum distance separable (MDS) codes is presented in~\cite{shah:mdsqueue} and the download of coded content from distributed storage systems is modelled as a $(k,l)$ fork-join system in~\cite{joshi:knforkjoin}. Generally, $(k,l)$ fork-join systems are governed by the $l$th order statistic. The authors of~\cite{joshi:knforkjoin} compute bounds of the mean response time using as approximation a $(k,l)$ split-merge model with Poisson arrivals, that is solved by the Pollaczek-Khinchin formula.

Most closely related to this work are the two recent papers~\cite{kesidis:forkjoin, rizk:forkjoin} that employ similar methods. The work~\cite{kesidis:forkjoin} considers single-stage fork-join systems with load balancing, general arrivals of the type defined in~\cite{yin:generalizedstochasticallyboundedburstiness}, and deterministic service. A service curve characterization of fork-join systems is provided and first statistical delay bounds are presented. The paper~\cite{rizk:forkjoin} contributes delay bounds for single-stage fork-join systems with renewal as well as Markov modulated inter-arrival times and independent and identically distributed (iid) service times. The authors prove that delays for fork-join systems grow as $\mathcal{O}(\ln k)$ for $k$ parallel servers as also found in~\cite{baccelli:forkjoin}. Split-merge systems exhibit an inferior performance, where the stability region of $k$ parallel M$\mid$M$\mid$1 servers is shown to decrease with $\ln k$. The work also includes a first application to multi-path routing, assuming a simple window-based protocol that operates on batches of packets. Response times are computed for entire batches and the authors conclude that multi-path routing is beneficial in case of only two parallel paths and moderate to high utilization. Otherwise, resequencing delays are found to dominate.

While~\cite{rizk:forkjoin} focuses on split-merge vs. fork-join systems with iid service times, we consider also the case of non-iid service, where we are able to generalize important results, such as the growth of delays in $\mathcal{O}(\ln k)$ for fork-join systems with $k$ parallel servers. Further, we investigate advanced fork-join systems beyond~\cite{kesidis:forkjoin, rizk:forkjoin}, including $(k,l)$-fork join systems and multi-stage fork-join networks, where we present a scaling of end-to-end delay bounds in $\mathcal{O}(h\ln(hk))$ for $h$ stages. Considering heterogeneous servers and deterministic as well as random thinning of arrival processes, we give insights into load-balancing. We draw essential conclusions on the advantages of load-balancing and $l$-out-of-$k$ redundancy.

The remainder of this paper is structured as follows. In Sec.~\ref{sec:forkjoin}, we phrase basic models of G$\mid$G$\mid$1 as well as GI$\mid$GI$\mid$1 fork-join systems in max-plus system theory and show an application to load balancing. We extend the model to parallel servers with thinning and resequencing, as it applies, e.g., in multi-path routing, in Sec.~\ref{sec:thinning}. In Sec.~\ref{sec:advanced}, we consider advanced $(k,l)$ fork-join systems and multi-stage fork-join networks. Sec.~\ref{sec:conclusions} presents brief conclusions.
%
%
\section{Basic Fork-Join Systems}
\label{sec:forkjoin}
In this section, we derive a set of results for basic fork-join systems. \cond{long}{Compared to~\cite{rizk:forkjoin}, we define a general queueing model in max-plus system theory~\cite{baccelli:synchronizationlinearity, chang:performanceguarantees, jiang:maxplus, jiang:onecoin, luebben:availbw2} that is a branch of the deterministic~\cite{cruz:networkdelaycalculus, chang:performanceguarantees, leboudec:networkcalculus}, respectively, stochastic network calculus~\cite{chang:performanceguarantees, burchard:endtoendstatisticalcalculus, ciucu:networkservicecurvescaling2, fidler:momentcalculus, jiang:stochasticnetworkcalculus, fidler:netcalcsurvey, ciucu:goodvalue, fidler:netcalcguide}. The model enables us to express also more advanced fork-join systems and networks in the following sections.}\cond{short}{Compared to~\cite{rizk:forkjoin}, we define a general queueing model in max-plus system theory~\cite{baccelli:synchronizationlinearity, chang:performanceguarantees} that enables us to express also more advanced fork-join systems and networks in the following sections.} Further, we generalize central results from~\cite{rizk:forkjoin} to the case of G$\mid$G$\mid$1 queues. \cond{long}{For the special case of GI$\mid$GI$\mid$1 queues we recover the bounds from~\cite{rizk:forkjoin}.}
\subsection{Fork-Join System Model}
We label jobs by $n$ where $n \ge 1$ and denote $A(n)$ the time of arrival of job $n$. For notational convenience, we define $A(0) = 0$. Further, we let $A(m,n) = A(n) - A(m)$ be the time between the arrival of job $m$ and job $n$ for $n \ge m \ge 1$. It follows that $A(\nu,\nu+1)$ denotes the inter-arrival time between job $\nu$ and job $\nu+1$ for $\nu \ge 1$, so that we can also express
\begin{equation}
A(m,n) = \sum_{\nu=m}^{n-1} A(\nu,\nu+1)
\label{eq:arrivalincrements}
\end{equation}
as a sum of inter-arrival times. By definition, $A(n,n)=0$ for $n \ge 1$. Similarly, $D(n)$ defines the departure time of job $n$.

We denote $S(n)$ the service time of job $n$ for $n \ge 1$. The cumulative service time for jobs $m$ up to and including $n$ follows for $n \ge m \ge 1$ as
\begin{equation}
S(m,n) = \sum_{\nu=m}^n S(\nu) .
\label{eq:serviceincrements}
\end{equation}
For a lossless, work-conserving, first-in first-out (fifo) system, the departures for all $n \ge 1$ are~\cite{chang:performanceguarantees}
\begin{equation}
D(n) = \max_{\nu \in [1,n]} \{ A(\nu) + S(\nu,n) \} .
\label{eq:exactmaxplusserviceprocess}
\end{equation}
To see this, consider all jobs $[\nu, n]$ that belong to the same busy period as job $n$, i.e., job $\nu$ arrives at time $A(\nu)$ at an empty system that is continuously busy afterwards. It follows that $D(n) = A(\nu) + S(\nu,n)$ since starting at $A(\nu)$ all jobs $[\nu,n]$ have to be served until job $n$ departs, requiring a cumulative service time of $S(\nu,n)$.
In general $\nu$ is a priori unknown, however, it is known that there exists $\nu \ge 1$ such that $D(n) = A(\nu) + S(\nu,n)$ for $n \ge 1$ so that it can be concluded that
%
%
\begin{equation}
D(n) \le \max_{\nu \in [1,n]} \{ A(\nu) + S(\nu,n) \} .
\label{eq:maxplusserviceprocess}
\end{equation}
Further, $D(n) \ge D(\nu) + S(\nu+1,n)$ for all $n > \nu \ge 1$. Since $D(\nu) \ge A(\nu) + S(\nu)$ it follows that $D(n) \ge A(\nu) + S(\nu,n)$ for all $n \ge \nu \ge 1$. Consequently, $D(n) \ge \max_{\nu \in [1,n]} \{ A(\nu) + S(\nu,n)\}$ so that combined with~\eqref{eq:maxplusserviceprocess} we obtain~\eqref{eq:exactmaxplusserviceprocess}. For many applications the upper bound~\eqref{eq:maxplusserviceprocess} is sufficient as it will provide upper bounds of waiting and sojourn times.

For the sojourn time of job $n$ defined as $T(n) = D(n) - A(n)$ it follows by insertion of~\eqref{eq:exactmaxplusserviceprocess} for $n \ge 1$ that
\begin{equation}
T(n) = \max_{\nu \in [1,n]} \{ S(\nu,n) - A(\nu,n) \} .
\label{eq:sojourntime}
\end{equation}
For the waiting time, defined as $W(n) = [D(n-1) - A(n)]^+$ for $n \ge 1$, where $[X]^+ = \max\{X,0\}$ is the non-negative part and $D(0)=0$ by definition, it holds for $n \ge 1$ that
\begin{equation}
W(n) = \biggl[ \sup_{\nu \in [1,n-1]} \{ S(\nu,n-1) - A(\nu,n) \} \biggr]^+ .
\label{eq:waitingtime}
\end{equation}
Here, we use $\sup$ since for $n=1$~\eqref{eq:waitingtime} evaluates an empty set. For non-negative real numbers the $\sup$ of an empty set is zero.

\cond{long}{
\begin{figure}
  \centering
  \includegraphics[width=0.85\columnwidth]{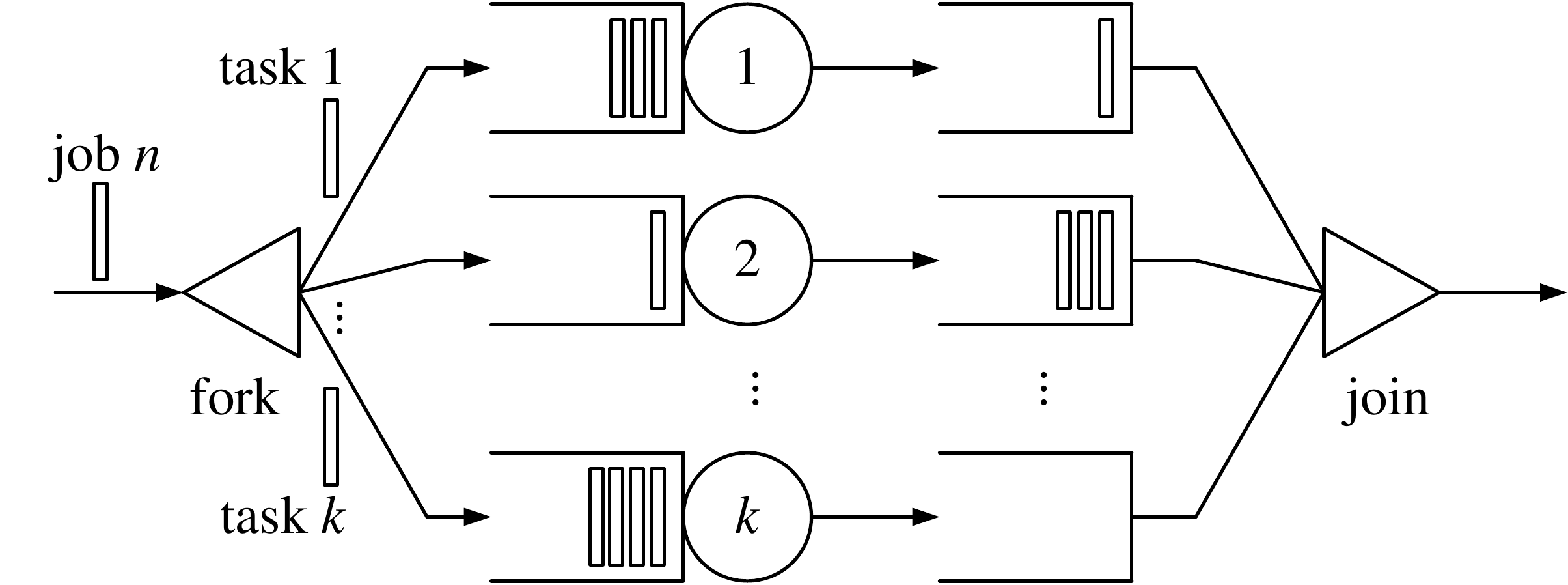}
  \caption{Fork-join system. Each job is composed of $k$ tasks with individual service requirements, that are mapped to $k$ fifo servers (fork). Once all tasks of a job are completed, the job leaves the system (join), i.e., the tasks of a job wait at the join stage until all tasks of the job are completed.}
  \label{fig:fjqueue}
\end{figure}
}
In a fork-join system, \cond{long}{see Fig.~\ref{fig:fjqueue}, }each job $n \ge 1$ is composed of $k$ tasks with service times $S_i(n)$ for $i \in [1,k]$, i.e., the service requirements of the tasks may differ from each other. The tasks are mapped to $k$ parallel servers (fork) and once all tasks are served, the job leaves the system (join). The parallel servers are not synchronized, i.e., server $i \in [1,k]$ starts serving task $i$ of job $n+1$ if any, once it finished serving task $i$ of job $n$ that departs from server $i$ at $D_i(n)$. A job $n$ is finished once all of its tasks $i \in [1,k]$ are finished, i.e.,
\begin{equation*}
D(n) = \max_{i \in [1,k]} \{ D_i(n) \} .
\end{equation*}
Hence, the sojourn time is $T(n) = \max_{i \in [1,k]} \{ D_i(n) - A(n) \}$ for $n \ge 1$ and by insertion of~\eqref{eq:exactmaxplusserviceprocess} for each server $i \in [1,k]$
\begin{equation}
T(n) = \max_{i \in [1,k]} \left\{ \max_{\nu \in [1,n]} \{ S_i(\nu,n) - A(\nu,n) \} \right\} .
\label{eq:nonblockingsojourntime}
\end{equation}
Similarly, a waiting time that considers the task that starts service last can be defined as a $\max_{i \in [1,k]}$ of~\eqref{eq:waitingtime}.

\cond{long}{
\begin{figure}
  \centering
  \includegraphics[width=0.77\columnwidth]{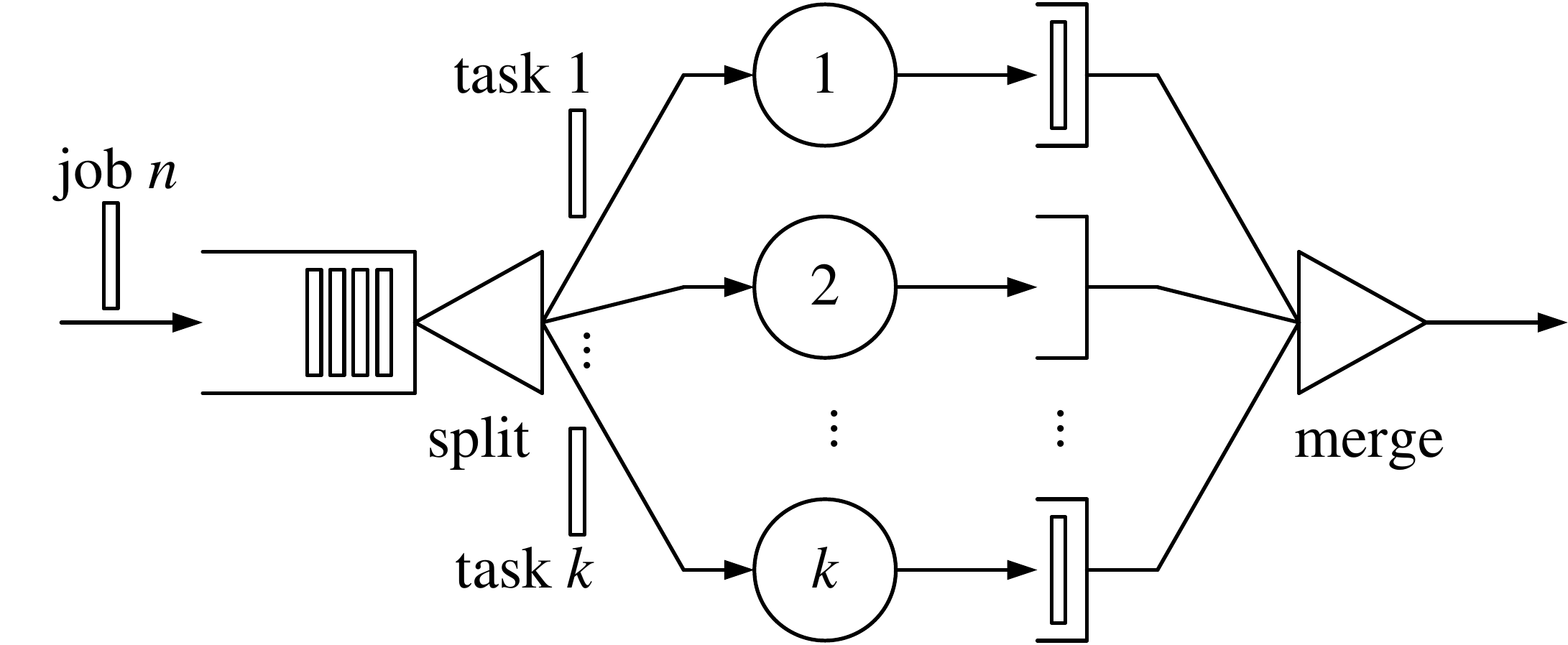}
  \caption{Split-merge system. Compared to the fork-join system, tasks have an additional synchronization constraint, i.e., the execution of the tasks of a job has to start at the same time.}
  \label{fig:smqueue}
\end{figure}
}
In a split-merge system on the other hand, \cond{long}{see Fig.~\ref{fig:smqueue}, }the execution of all tasks of a job starts simultaneously, i.e., once server $i$ finishes task $i$ of job $n$ it idles until all tasks $j \in [1,k]$ of that job are finished, before execution of the tasks of job $n+1$ if any starts. The combined service and idle times at each of the servers follows as $\max_{i \in [1,k]} \{ S_i(n) \}$ such that the split-merge system as a whole can be expressed as a single server system that is governed by~\eqref{eq:exactmaxplusserviceprocess} with service process
\begin{equation}
S(m,n) = \sum_{\nu=m}^n \max_{i \in [1,k]} \{ S_i(\nu) \} .
\label{eq:blockingserviceprocess}
\end{equation}

The general problem of split-merge systems is that $\max_{i \in [1,k]} \{ S_i(\nu) \}$ increases with $k$, with few exceptions such as in case of identical task service times. The increase implies longer idle times that result in a reduced stability region.
%
%
\subsection{Performance Bounds for GI$\mid$GI$\mid$1 and G$\mid$G$\mid$1 servers}
Next, we derive statistical performance bounds for the fork-join systems defined above. We consider the general model of G$\mid$G$\mid$1 servers. The results enable us to generalize recent findings obtained for iid service times, i.e., for a GI service model, in~\cite{rizk:forkjoin}. Like~\cite{rizk:forkjoin}, we assume that the inter-arrival times of jobs are independent of their service times. We do, however, not require that the service times of the tasks of a job are independent. We will show how to benefit from statistically independent tasks in Sec.~\ref{sec:klforkjoin}.

We consider arrival and service processes that belong to the broad class of $(\sigma,\rho)$-constrained processes~\cite{chang:performanceguarantees}, that are characterized by affine bounding functions of the moment generating function (MGF). The MGF of a random variable $X$ is defined as $\mathsf{M}_X(\theta) =\mathsf{E}[e^{\theta X}]$ where $\theta$ is a free parameter. The following definition adapts~\cite{chang:performanceguarantees} to max-plus systems.
\begin{definition}
\label{def:sigmarho}
An arrival process is $(\sigma_A,\rho_A)$-lower constrained if for all $n \ge m \ge 1$ and $\theta > 0$ it holds that
\begin{equation*}
\mathsf{E}[e^{-\theta A(m,n)}] \le e^{-\theta (\rho_A(-\theta) (n-m) - \sigma_A(-\theta))} .
\end{equation*}
Similarly, a service process is $(\sigma_S,\rho_S)$-upper constrained if for all $n \ge m \ge 1$ and $\theta > 0$ it holds that
\begin{equation*}
\mathsf{E}[e^{\theta S(m,n)}] \le e^{\theta(\rho_S(\theta) (n-m+1) + \sigma_S(\theta))} .
\end{equation*}
\end{definition}

For the special case of iid inter-arrival and service times, i.e., in case of GI$\mid$GI$\mid$1 servers, we have $\mathsf{E}[e^{-\theta A(n,n+1)}] = \mathsf{E}[e^{-\theta A(1,2)}]$ and $\mathsf{E}[e^{\theta S(n)}] = \mathsf{E}[e^{\theta S(1)}]$ for all $n \ge 1$. Further, since the MGF of a sum of independent random variables is the product of their MGFs, i.e., $\mathsf{M}_{X+Y}(\theta) = \mathsf{M}_{X}(\theta)\mathsf{M}_{Y}(\theta)$, minimal traffic and service parameters can be derived from~\eqref{eq:arrivalincrements} and~\eqref{eq:serviceincrements} as $\sigma_A(-\theta) = 0$, $\sigma_S(\theta) = 0$,
\begin{equation}
\rho_A(-\theta) = -\frac{1}{\theta} \ln \mathsf{E}\bigl[e^{-\theta A(1,2)}\bigr],
\label{eq:arrivalparameter}
\end{equation}
and
\begin{equation}
\rho_S(\theta) = \frac{1}{\theta} \ln \mathsf{E}\bigl[e^{\theta S(1)}\bigr].
\label{eq:serviceparameter}
\end{equation}
Parameter $\rho_A(-\theta)$ decreases with $\theta > 0$ from the mean to the minimum inter-arrival time and $\rho_S(\theta)$ increases from the mean to the maximum service time.

\begin{theorem}
\label{th:gg1}
Consider a fork-join system with $i \in [1,k]$ parallel servers, each with arrivals and service as specified by Def.~\ref{def:sigmarho}. It holds for the waiting time for all $n \ge 1$ that
\begin{equation*}
\mathsf{P} \left[ W(n) > \tau \right] \le \sum_{i=1}^k \alpha_i e^{-\theta_i \tau},
\end{equation*}
and for the sojourn time that
\begin{equation*}
\mathsf{P} [ T(n) > \tau ] \le \sum_{i=1}^k \alpha_i e^{\theta_i \rho_{S_i}(\theta_i)} e^{-\theta_i \tau} .
\end{equation*}
In the general case of G$\mid$G$\mid$1 servers, the free parameters \mbox{$\theta_i > 0$} have to satisfy $\rho_{S_i}(\theta_i) < \rho_A(-\theta_i)$ for $i \in [1,k]$ and
\begin{equation*}
\alpha_i = \frac{e^{\theta_i(\sigma_A(-\theta_i) + \sigma_{S_i}(\theta_i))}}{1-e^{-\theta_i (\rho_A(-\theta_i)-\rho_{S_i}(\theta_i))}} .
\end{equation*}
In the special case of GI$\mid$GI$\mid$1 servers, the $\theta_i > 0$ have to satisfy $\rho_{S_i}(\theta_i) \le \rho_A(-\theta_i)$ and $\alpha_i=1$ for $i\in[1,k]$.
\end{theorem}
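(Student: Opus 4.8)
The plan is to reduce the fork-join bounds to single-server bounds via a union bound over the $k$ servers, and then bound each single G$\mid$G$\mid$1 (resp. GI$\mid$GI$\mid$1) queue separately. Since a job departs when its last task does, $T(n)=\max_{i\in[1,k]}\{D_i(n)-A(n)\}$, so by~\eqref{eq:nonblockingsojourntime} the per-server sojourn time is $T_i(n)=\max_{\nu\in[1,n]}\{S_i(\nu,n)-A(\nu,n)\}$, and the fork-join waiting time is the $\max_{i\in[1,k]}$ of~\eqref{eq:waitingtime} taken per server. A union bound gives $\mathsf{P}[T(n)>\tau]\le\sum_{i=1}^k\mathsf{P}[T_i(n)>\tau]$ and likewise $\mathsf{P}[W(n)>\tau]\le\sum_{i=1}^k\mathsf{P}[W_i(n)>\tau]$, so it suffices to produce, for each $i$, a bound of the form $\alpha_i e^{\theta_i\rho_{S_i}(\theta_i)}e^{-\theta_i\tau}$ for $T_i(n)$ and $\alpha_i e^{-\theta_i\tau}$ for $W_i(n)$.

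For the general G$\mid$G$\mid$1 case I would use a second union bound over the index $\nu$, a Chernoff step with parameter $\theta_i>0$, and then the assumed independence of arrivals and service together with Def.~\ref{def:sigmarho}. Concretely, $\mathsf{E}[e^{\theta_i(S_i(\nu,n)-A(\nu,n))}]=\mathsf{E}[e^{\theta_i S_i(\nu,n)}]\,\mathsf{E}[e^{-\theta_i A(\nu,n)}]\le e^{\theta_i(\sigma_{S_i}(\theta_i)+\sigma_A(-\theta_i))}\,e^{\theta_i\rho_{S_i}(\theta_i)}\,e^{-\theta_i(\rho_A(-\theta_i)-\rho_{S_i}(\theta_i))(n-\nu)}$. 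Summing over $\nu\in[1,n]$ and substituting $j=n-\nu$ produces a geometric series $\sum_{j=0}^{n-1}e^{-\theta_i(\rho_A(-\theta_i)-\rho_{S_i}(\theta_i))j}$ that is bounded by $1/(1-e^{-\theta_i(\rho_A(-\theta_i)-\rho_{S_i}(\theta_i))})$ exactly under the stated strict stability condition $\rho_{S_i}(\theta_i)<\rho_A(-\theta_i)$, which yields the claimed sojourn-time bound. The waiting-time bound is the same computation with $S_i(\nu,n-1)$ in place of $S_i(\nu,n)$ in~\eqref{eq:waitingtime}, which drops the factor $e^{\theta_i\rho_{S_i}(\theta_i)}$ and shifts the geometric sum to start at $j=1$; the outer $[\,\cdot\,]^+$ and the empty $\sup$ at $n=1$ are harmless for $\tau>0$.

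The GI$\mid$GI$\mid$1 refinement — $\alpha_i=1$, with equality permitted in $\rho_{S_i}(\theta_i)\le\rho_A(-\theta_i)$ — is where the geometric-series argument breaks down (the sum diverges at equality) and a different tool is needed; this is the step I expect to be the main obstacle. Here the increments $U_i(\mu)=S_i(\mu)-A(\mu,\mu+1)$ are iid, so after reversing the summation index the partial sums appearing in~\eqref{eq:waitingtime} and~\eqref{eq:nonblockingsojourntime} are those of an ordinary random walk $\tilde S_j=\sum_{l=1}^j\tilde U_i(l)$ with $\tilde S_0=0$, and one has $W_i(n)=\max_{0\le j\le n-1}\tilde S_j$ and $T_i(n)=S_i(n)+\max_{0\le j\le n-1}\tilde S_j$. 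The process $M_j=e^{\theta_i(S_i(n)+\tilde S_j)}$ (with the leading $S_i(n)$ present only in the sojourn-time case, $M_j=e^{\theta_i\tilde S_j}$ for the waiting time) is a non-negative supermartingale as soon as $\mathsf{E}[e^{\theta_i U_i}]=e^{\theta_i(\rho_{S_i}(\theta_i)-\rho_A(-\theta_i))}\le1$, using~\eqref{eq:arrivalparameter} and~\eqref{eq:serviceparameter}. Doob's maximal inequality then gives $\mathsf{P}[\max_{0\le j\le n-1}M_j\ge e^{\theta_i\tau}]\le\mathsf{E}[M_0]e^{-\theta_i\tau}$, i.e. $\mathsf{P}[W_i(n)>\tau]\le e^{-\theta_i\tau}$ and, evaluating $\mathsf{E}[M_0]=\mathsf{E}[e^{\theta_i S_i(n)}]=e^{\theta_i\rho_{S_i}(\theta_i)}$ by independence of $S_i(n)$ from the earlier increments, $\mathsf{P}[T_i(n)>\tau]\le e^{\theta_i\rho_{S_i}(\theta_i)}e^{-\theta_i\tau}$. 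Combining with the union bound over $i$ closes both cases. The delicate points are justifying the index reversal (valid only because the $U_i(\mu)$ are iid), choosing the filtration that makes $S_i(n)$ measurable at time $0$ for the sojourn-time supermartingale, and observing that Doob needs only $\mathsf{E}[e^{\theta_i U_i}]\le1$ rather than the strict drift inequality required by the geometric sum.
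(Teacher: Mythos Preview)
Your proposal is correct and matches the paper's proof essentially step for step: the paper too reduces to a single server via the union bound over $i$, handles the G$\mid$G$\mid$1 case by estimating the $\max$ by a sum, factoring via independence, inserting the $(\sigma,\rho)$-constraints, and summing the resulting geometric series before Chernoff, and handles the GI$\mid$GI$\mid$1 case by defining the reversed-index process $U(m)=e^{\theta(S(n-m+1,n)-A(n-m+1,n))}$, verifying it is a supermartingale when $\rho_S(\theta)\le\rho_A(-\theta)$, and applying Doob's inequality with $\mathsf{E}[U(1)]=\mathsf{E}[e^{\theta S(1)}]$. Your remarks about the index reversal, the filtration making $S_i(n)$ measurable at time~$0$, and the weak inequality sufficing for Doob are exactly the points the paper relies on (implicitly) as well.
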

For the special case of GI$\mid$GI$\mid$1 servers, Th.~\ref{th:gg1} recovers the bounds from~\cite{rizk:forkjoin}. Further, for $k=1$ the classical bound for the waiting time of a single GI$\mid$GI$\mid$1 server~\cite{kingman:gg1} is obtained in the max-plus system theory. Like~\cite{kingman:gg1}, the proof uses Doob's martingale inequality~\cite{doob:stochasticprocesses}. The proof for the G$\mid$G$\mid$1 server brings the approach from~\cite{chang:performanceguarantees, fidler:momentcalculus} forward to max-plus fork-join systems. The important property of the G$\mid$G$\mid$1 result is that it differs only by constants $\alpha_i$ from the GI$\mid$GI$\mid$1 result and otherwise recovers the characteristic exponential tail decay $e^{-\theta_i \tau}$ with decay rate $\theta_i$. We note that Th.~\ref{th:gg1} does not make an assumption of independence regarding the parallel servers. Indeed, independence cannot be assumed as the waiting and sojourn times of the individual servers depend on the common arrival process~\cite{baccelli:forkjoin, kemper:forkjoin}. The proof is provided in the appendix.

To obtain a solution for split-merge systems, recall that the system can be expressed as a single server with service process~\eqref{eq:blockingserviceprocess}. A corresponding service constraint as in Def.~\ref{def:sigmarho} can be directly inserted into Th.~\ref{th:gg1} where $k=1$. For iid service times it can be seen from~\eqref{eq:blockingserviceprocess} and~\eqref{eq:serviceparameter} that \cond{short}{$\rho_S(\theta)$}\cond{long}{
\begin{equation*}
\rho_S(\theta) = \frac{1}{\theta} \ln \mathsf{E}\bigl[e^{\theta \max_{i \in [1,k]} \{S_i(1)\}}\bigr] \le \frac{1}{\theta} \ln \biggl(\sum_{i = 1}^{k} \mathsf{E}\bigl[e^{\theta S_i(1)}\bigr] \biggr).
\end{equation*}}
has at most a logarithmic growth with the number of parallel servers, resulting in a corresponding reduction of the stability region. A decrease of the stability region with $\ln k$ is also shown in~\cite{rizk:forkjoin}, where the authors advise against split-merge implementations based on an in-depth comparison with fork-join systems. In the sequel, we will focus on fork-join systems.

To investigate the scaling of fork-join systems, we consider the homogeneous case where the service times $S_i(\nu)$ for $i \in [1,k]$ are identically distributed\cond{long}{ with parameters $\rho_{S_i} = \rho_S$}, but not necessarily independent. From Th.~\ref{th:gg1} it follows for the sojourn time that
\begin{equation}
\mathsf{P} [ T(n) > \tau ] \le k \alpha e^{\theta \rho_{S}(\theta)} e^{-\theta \tau} .
\label{eq:homogeneousparallelsojourntime}
\end{equation}
After some reordering, the bound of the sojourn time
\begin{equation*}
\mathsf{P} \left[ T(n) > \tau + \frac{\ln k}{\theta} \right] \le \alpha e^{\theta \rho_{S}(\theta)} e^{-\theta \tau}
\end{equation*}
shows a growth with $\ln k$. By integration of the tail of~\eqref{eq:homogeneousparallelsojourntime}, where we use that $\mathsf{P} [ T(n) > \tau ] \le 1$,
\cond{long}{
\begin{align*}
\mathsf{E}[T(n)] = & \int_{0}^{\infty} \mathsf{P} [ T(n) > \tau ] \mathrm{d}\tau \\
\le & \int_{0}^{\tau^*} \mathrm{d}\tau + k \alpha e^{\theta \rho_{S}(\theta)} \int_{\tau^*}^{\infty} e^{-\theta \tau} \mathrm{d}\tau,
\end{align*}
and $\tau^* = \ln \bigl(k \alpha e^{\theta \rho_{S}(\theta)}\bigr)/\theta$, }the expected sojourn time
\begin{equation}
\mathsf{E} [ T(n) ] \le \rho_{S}(\theta) +  \frac{\ln(k\alpha)+1}{\theta}
\label{eq:homogeneousparallelexpectedsojourntime}
\end{equation}
is also limited by $\ln k$. The result applies for general arrival and service processes and generalizes the finding of $\ln k$ that is obtained in~\cite{baccelli:forkjoin, rizk:forkjoin} for iid service times. The growth is larger for smaller $\theta$ corresponding to a higher utilization.

\paragraph*{M$\mid$M$\mid$1 Servers}
For illustration, we consider iid exponential inter-arrival and service times with parameters $\lambda$ and $\mu$, respectively. The normalized log-MGFs~\eqref{eq:arrivalparameter} and~\eqref{eq:serviceparameter} are
\begin{equation}
\rho_A(-\theta) = -\frac{1}{\theta} \ln \!\left(\!\frac{\lambda}{\lambda+\theta}\!\right)
\text{ and }
\rho_{S}(\theta) = \frac{1}{\theta} \ln \!\left(\!\frac{\mu}{\mu-\theta}\!\right)\!,
\label{eq:mm1parameters}
\end{equation}
\begin{figure}
  \centering
  \cond{short}{\includegraphics[width=0.6\columnwidth]{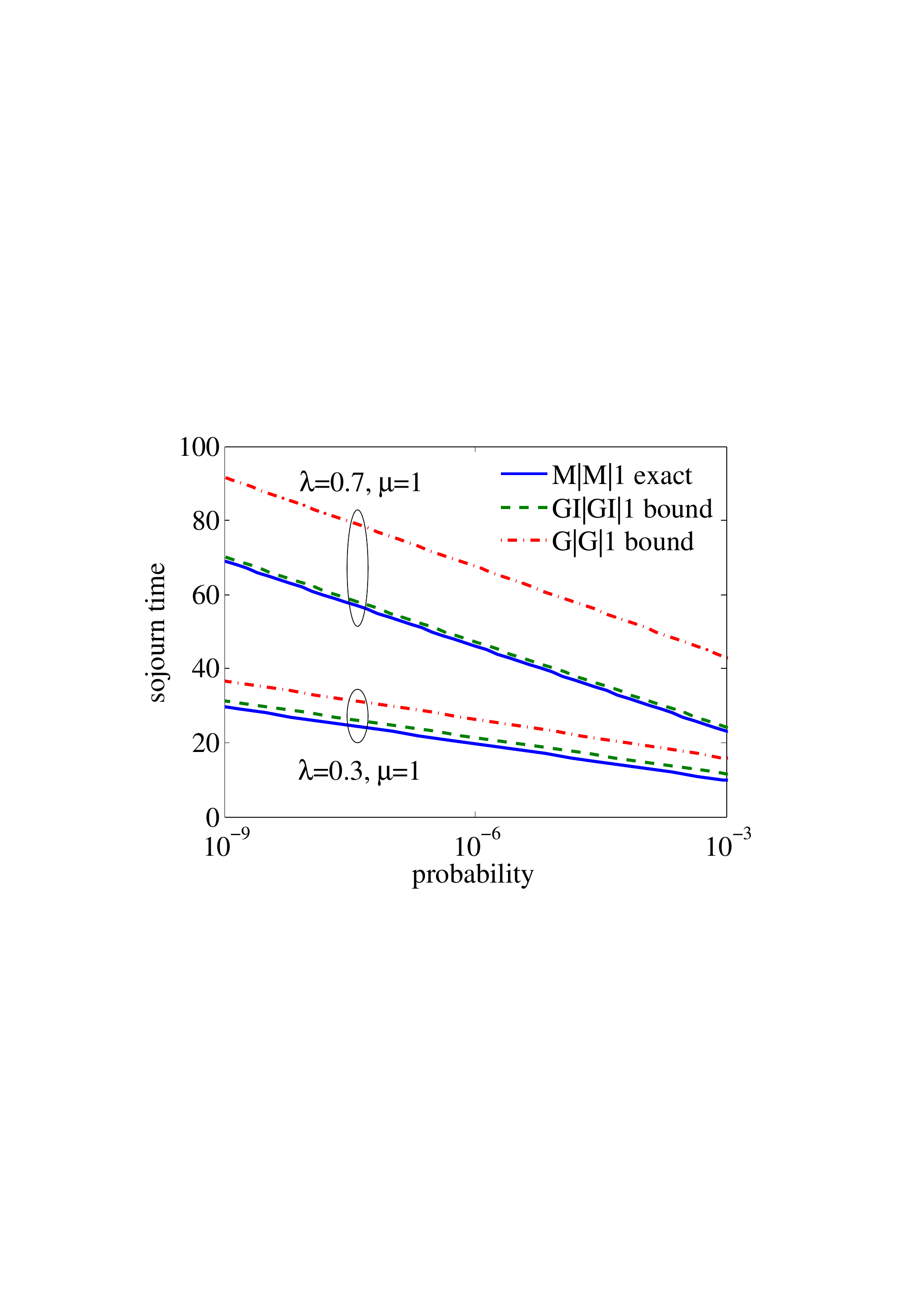}}
  \cond{long}{\includegraphics[width=0.75\columnwidth]{mm1boundsmu1lambda0307}}
  \caption{M$\mid$M$\mid$1 server. The bounds show the correct exponential tail decay.}
  \label{fig:backlog_qt}
  \cond{short}{\vspace{-10pt}}
\end{figure}
where $\theta \in (0,\mu)$. From the condition $\rho_{S}(\theta) \le \rho_A(-\theta)$ it follows that $\theta \le \mu-\lambda$ under the stability condition $\mu > \lambda$. By choice of the maximal $\theta = \mu - \lambda$ we have from \eqref{eq:homogeneousparallelsojourntime} that
\begin{equation}
\mathsf{P} \left[ T(n) > \tau \right] \le k \frac{\mu}{\lambda} e^{-(\mu-\lambda) \tau} .
\label{eq:mm1bound}
\end{equation}

First, we consider the accuracy of the bounds from Th.~\ref{th:gg1} for $k=1$ that is the case of a single M$\mid$M$\mid$1 server. Compared to the exact distribution of the sojourn time of the M$\mid$M$\mid$1 server, that is $\mathsf{P} \left[ T(n) > \tau \right] \le e^{-(\mu-\lambda) \tau}$ see, e.g.,~\cite{adan:queueingsystems},~\eqref{eq:mm1bound} has the same tail decay, but shows a different pre-factor, i.e., $\mu/\lambda$. Obviously, the bound becomes better if the utilization is high so that $\mu/\lambda$ approaches one. In Fig.~\ref{fig:backlog_qt}, we illustrate the bounds from Th.~\ref{th:gg1} for a single server compared to the exact M$\mid$M$\mid$1 result. Clearly, the curves show the same tail decay, where the GI$\mid$GI$\mid$1 bound provides better numerical accuracy compared to the G$\mid$G$\mid$1 bound that does not use independence of the increment processes and hence has parameter $\alpha > 1$.

In Fig.~\ref{fig:mm1forkjoin}, we consider $k \ge 1$ parallel servers with $\mu_i=1$ and show bounds of the expected sojourn time and sojourn time quantiles $\tau$, where $\mathsf{P}[T(n) \ge \tau] \le \varepsilon$ and $\varepsilon = 10^{-6}$. The curves show the characteristic logarithmic growth with $k$.

For reasons of space, we will frequently employ the M$\mid$M$\mid$1 model~\eqref{eq:mm1parameters} for numerical evaluations. We note that results of the same type can be derived from Th.~\ref{th:gg1} for G$\mid$G$\mid$1 servers with parameters specified by Def.~\ref{def:sigmarho}.
%
%
\subsection{Load Balancing}
\label{sec:loadbalacingforkjoin}
Next, we consider heterogeneous GI$\mid$GI$\mid$1 servers, i.e., the tasks of a job can have service requirements $S_i(n)$ for $i \in [1,k]$ with different parameters. Also, the parallel servers can have different capacities $c_i$. The resulting service times become $S_i(n)/c_i$ and the MGF follows for $n \ge 1$ as
\begin{equation}
\mathsf{M}_{S_i(n)/c_i} (\theta_i) = \mathsf{E}[e^{\theta_i S_i(1)/c_i}] = \mathsf{M}_{S_i(1)} (\theta_i/c_i) .
\label{eq:mgfscaling}
\end{equation}
By application of Th.~\ref{th:gg1} it holds for all $n \ge 1$ that
\begin{equation}
\mathsf{P} [ T(n) > \tau ] \le \sum_{i=1}^k e^{\theta_i \rho_{S_i/c_i}(\theta_i)} e^{-\theta_i \tau} ,
\label{eq:sojourntimeforkjoin}
\end{equation}
for all $\theta_i > 0$ that satisfy $\rho_{S_i/c_i}(\theta_i) \le \rho_{A}(-\theta_i)$.
\begin{figure}
  \centering
  \cond{short}{\includegraphics[width=0.6\columnwidth]{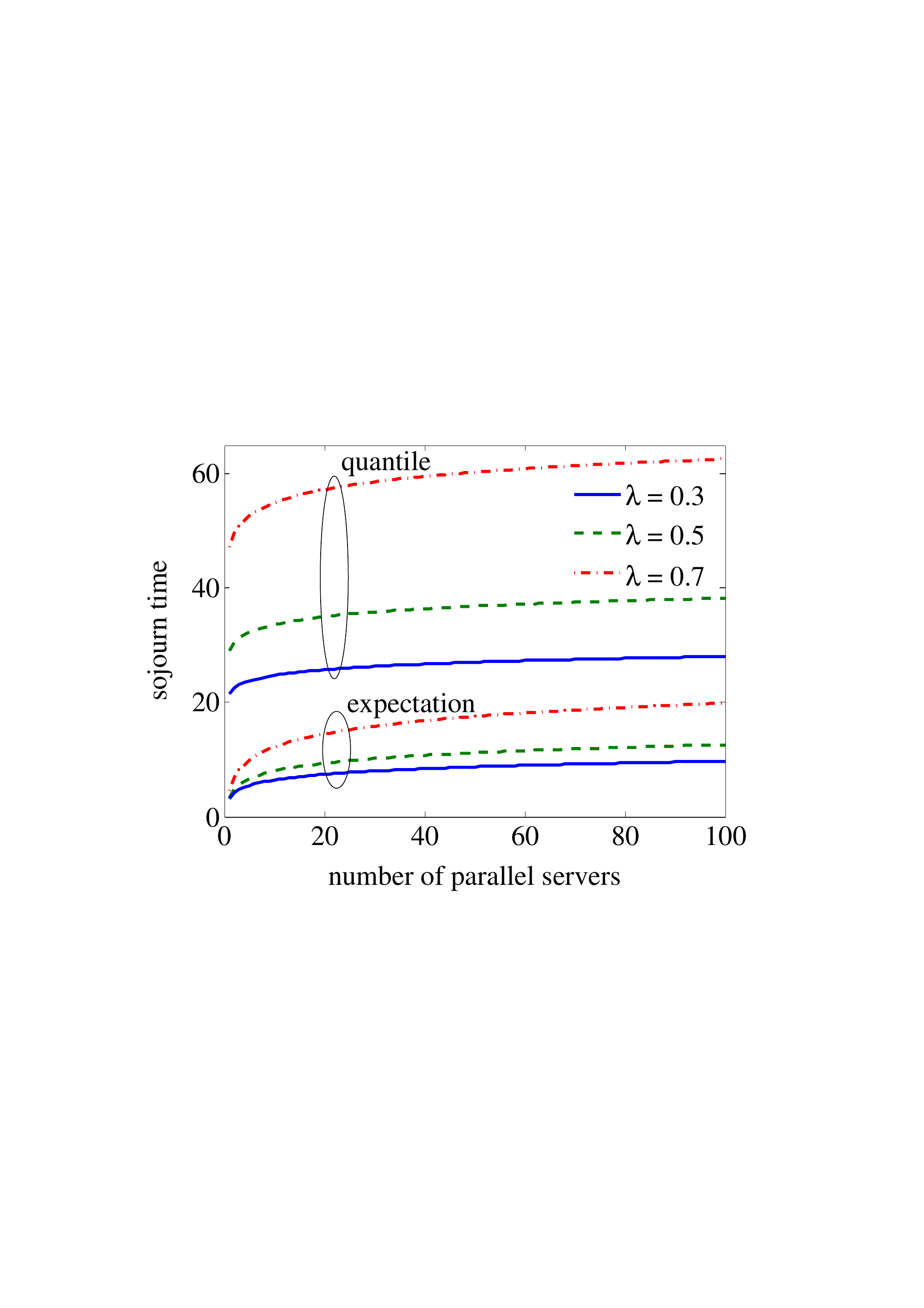}}
  \cond{long}{\includegraphics[width=0.75\columnwidth]{mm1forkjoinmu1lambda030507}}
  \caption{Fork-join system. Sojourn time bounds grow with $\ln k$ for $k$ servers.}
  \label{fig:mm1forkjoin}
  \cond{short}{\vspace{-10pt}}
\end{figure}

In the following we consider two different strategies to assign the capacities $c_i$ for load balancing. For ease of exposition, we assume that the $c_i$ are continuous.
\begin{enumerate}
  \item Assign capacity proportional to the mean service time requirements of the tasks. As a consequence, the mean service times and the utilization of each of the parallel servers becomes identical. The allocation is independent of the inter-arrival times. It is stable as long as the mean inter-arrival time is larger than the mean service time.
      \label{strt:mean}
  \item Assign capacity such that the same statistical bound of the sojourn time is achieved for all servers. This allocation may differ from the first one, so that the parallel servers may not have identical utilization.
      \label{strt:tail}
\end{enumerate}
\paragraph*{M$\mid$M$\mid$1 Servers}
We start with the basic case of exponential inter-arrival and service times with parameters $\lambda$ and $\mu_i$, respectively. In case of strategy~\ref{strt:mean}, capacity is allocated proportionally to the mean service time requirements, i.e., $c_i = \upsilon/\mu_i$ where $\upsilon$ is a positive constant. Given the sum of the capacities is limited as $c = \sum_{i=1}^k c_i$, we have $\upsilon = c/\sum_{i=1}^{k} 1/\mu_i$. By insertion of~\eqref{eq:mgfscaling} into~\eqref{eq:serviceparameter} it follows with~\eqref{eq:mm1parameters} that
\begin{equation*}
\rho_{S_i/c_i}(\theta_i) = \frac{1}{\theta_i} \ln \left(\frac{\mu_i}{\mu_i - \theta_i/c_i}\right) = \frac{1}{\theta_i} \ln \left(\frac{\upsilon}{\upsilon-\theta_i}\right) .
\end{equation*}
To ensure $\rho_{S_i/c_i}(\theta_i) \le \rho_{A}(-\theta_i)$, where $\rho_{A}(-\theta_i)$ is given by~\eqref{eq:mm1parameters}, it has to hold that $\theta_i \le \upsilon-\lambda$ under the stability condition that $\upsilon > \lambda$. By choice of the maximal $\theta_i = \upsilon-\lambda$,~\eqref{eq:sojourntimeforkjoin} evaluates to
\begin{equation*}
\mathsf{P} [ T(n) > \tau ] \le k \frac{\upsilon}{\lambda} e^{-(\upsilon-\lambda) \tau}.
\end{equation*}
The result shows that strategy~\ref{strt:mean} in case of M$\mid$M$\mid$1 servers also achieves the target of strategy~\ref{strt:tail}, i.e., it implements identical bounds of the sojourn time for all $k$ servers.
\paragraph*{Gaussian Servers}
Secondly, we consider Gaussian inter-arrival and service times with mean and variance $\eta_A,\varsigma^2_A$ and $\eta_{S},\varsigma^2_{S}$, respectively. The normalized log-MGFs~\eqref{eq:arrivalparameter} and~\eqref{eq:serviceparameter} are \cond{short}{\vspace{-6pt}}
\begin{equation*}
\rho_A(-\theta) = \eta_A - \frac{\theta}{2}\varsigma^2_A  \quad \text{and} \quad \rho_{S}(\theta) = \eta_{S} + \frac{\theta}{2}\varsigma^2_{S} .
\end{equation*}
Considering heterogeneous service parameters $\eta_{S_i},\varsigma^2_{S_i}$, strategy~\ref{strt:mean} implements a capacity allocation $c_i = \upsilon \eta_{S_i}$, where $\upsilon$ is a positive constant as above. With~\eqref{eq:mgfscaling} we obtain
\begin{equation*}
\rho_{S_i/c_i}(\theta_i) = \frac{\eta_{S_i}}{c_i} + \frac{\theta_i}{2}\left(\frac{\varsigma_{S_i}}{c_i}\right)^2 = \frac{1}{\upsilon}+ \frac{\theta_i}{2} \left(\frac{\varsigma_{S_i}}{\upsilon \eta_{S_i}}\right)^2 .
\end{equation*}
A bound of the sojourn time follows from~\eqref{eq:sojourntimeforkjoin} where we choose the maximal $\theta_i$ that satisfy $\rho_{S_i/c_i}(\theta_i) \le  \rho_A(-\theta_i)$ for all $i \in [1,k]$. Clearly, as the parameters $\rho_{S_i/c_i}(\theta_i)$ are heterogeneous, the maximal tail decay parameters $\theta_i$ will in general not be identical. As a consequence, the capacity allocation does in this case not achieve the goal of strategy~\ref{strt:tail}.

To implement strategy~\ref{strt:tail},~\eqref{eq:sojourntimeforkjoin} implies that the capacity is allocated in a way such that identical parameters $\rho_{S_i/c_i}(\theta_i)$ are achieved for the same target tail decay parameter $\theta_i = \theta$ for all servers $i \in [1,k]$. Considering the minimal capacity allocation that satisfies $\rho_{S_i/c_i}(\theta) \le \rho_A(-\theta)$, the capacities $c_i$ follow readily as the solution of the quadratic problem
\begin{equation*}
\frac{\eta_{S_i}}{c_i} + \frac{\theta}{2} \left(\frac{\varsigma_{S_i}}{c_i}\right)^2 = \eta_A - \frac{\theta}{2}\varsigma^2_A .
\end{equation*}
The important observation is that the obvious strategy~\ref{strt:mean}, that balances the average utilization, does in general not achieve the same bound of the sojourn time for all servers, i.e., under strategy~\ref{strt:mean} certain servers may be late frequently.
%
%
\section{Thinning and Resequencing}
\label{sec:thinning}
We investigate systems of $k$ parallel servers with thinning and resequencing, as it applies, e.g., in case of multi-path routing. The difference to fork-join systems is that jobs are not composed of tasks that are served in parallel but instead each job is mapped in its entirety to one of the parallel servers. As a consequence, the external arrival process $A(n)$ is divided into $k$ thinned processes $A_i(m)$. The numbering of jobs is such that $A(n)$ is the arrival time of job $n$ before thinning, whereas $A_i(m)$ denotes the arrival time of the $m$th job of the thinned process at system $i$ with service time $S_i(m)$. The departures $D_i(m)$ are resequenced in the original order of $A(n)$ to form the departure process $D(n)$. We contribute solutions for random as well as deterministic thinning.
\cond{long}{An example of a system with deterministic thinning is depicted in Fig.~\ref{fig:trqueue}
\begin{figure}
  \centering
  \includegraphics[width=0.95\columnwidth]{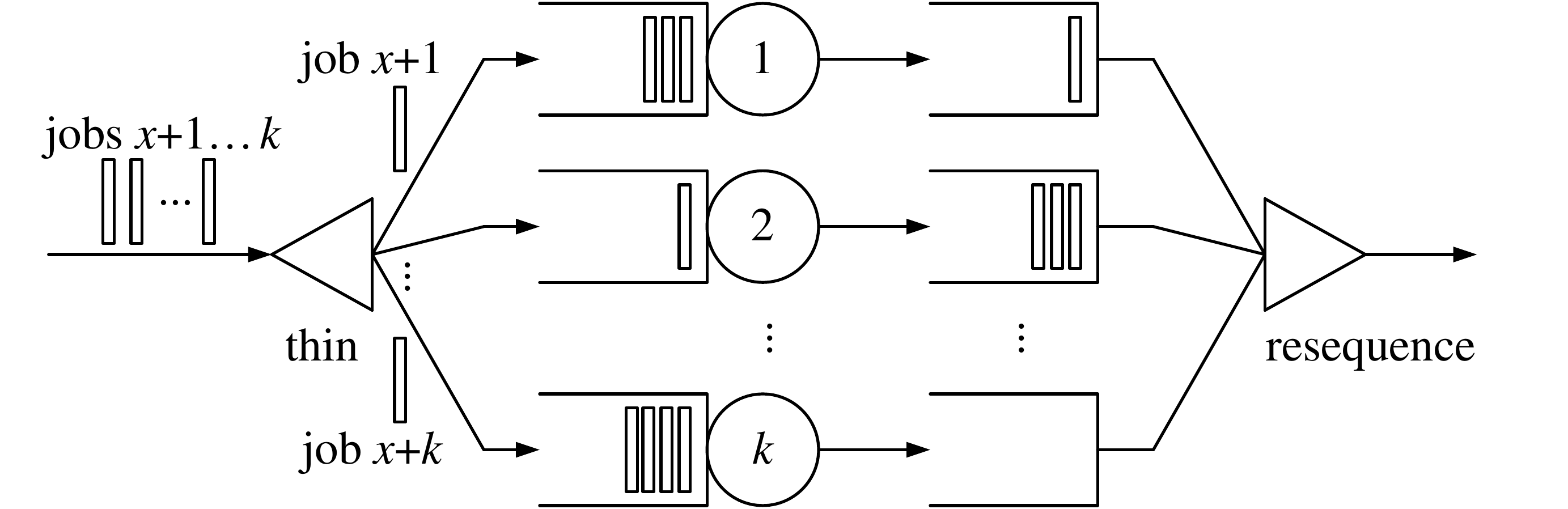}
  \caption{System with thinning and resequencing. Compared to a fork-join system, jobs are not composed of tasks. Instead, entire tasks are mapped to the parallel servers, e.g., deterministically in round robin order.}
  \label{fig:trqueue}
\end{figure}
}
%
%
\subsection{Random Thinning}
In case of random thinning, each job is mapped to one of the $k$ servers according to iid discrete (not necessarily uniform) random variables with support $[1,k]$. From the iid property it follows that the mapping of each job to a certain server $i \in [1,k]$ is an independent Bernoulli trial with parameter $p_i$. Denote $X_i(m)$ the number of the job that becomes the $m$th job that is mapped to server $i$. It follows that $X_i(m)$ is a sum of $m$ iid geometric random variables with parameter $p_i$, i.e., $X_i(m)$ is negative binomial. The arrival process of server $i$ is
\begin{equation}
A_i(m) = A(X_i(m)) ,
\label{eq:randomthinning}
\end{equation}
for $m \ge 1$. Conversely, given jobs $[1,n]$ of the external arrival process, denote $Y_i(n)$ the quantity of jobs that are mapped to server $i$. It follows that $Y_i(n)$ is binomial with parameter $p_i$. Considering fifo servers with arrival processes~\eqref{eq:randomthinning} and departure processes $D_i(m)$ for $i \in [1,k]$, the combined in-sequence departure process for $n \ge 0$ follows as
\begin{equation}
D(n) = \max_{i \in [1,k]} \{D_i(Y_i(n))\} ,
\label{eq:randomresequencing}
\end{equation}
where $D_i(0)=0$ by convention. Note that~\eqref{eq:randomresequencing} has to verify only the departure of job $Y_i(n)$ of each server $i \in [1,k]$ since the departure of job $Y_i(n)$ from server $i$ implies the departure of all jobs $\nu \in [1,Y_i(n)]$ of the same server due to fifo order.

Next, we substitute~\eqref{eq:exactmaxplusserviceprocess} for $D_i(Y_i(n))$ in~\eqref{eq:randomresequencing} and use~\eqref{eq:randomthinning} to derive the sojourn time $T(n) = D(n) - A(n)$ for $n \ge 1$ as
\begin{equation*}
T(n) = \max_{i \in [1,k]} \left\{ \sup_{\nu \in [1,Y_i(n)]} \{S_i(\nu,Y_i(n)) - A(X_i(\nu),n) \} \right\} ,
\end{equation*}
where $\sup\{\emptyset\} = 0$. Further, since $A(n) \ge A_i(Y_i(n))$ for all $i \in [1,k]$ and $n \ge 1$ and with~\eqref{eq:randomthinning}, we can estimate
\begin{equation}
T(n) \le \max_{i \in [1,k]} \left\{ \sup_{\nu \in [1,Y_i(n)]} \{S_i(\nu,Y_i(n)) - A_i(\nu,Y_i(n)) \} \right\} .
\label{eq:sojourntimethinning}
\end{equation}
Due to the similarity of~\eqref{eq:sojourntimethinning} with~\eqref{eq:nonblockingsojourntime}, the derivation of delay bounds closely follows the proof of Th.~\ref{th:gg1} and recovers the same result with one essential difference: Instead of the arrival process $A(n)$ the thinned processes $A_i(m)$ have to be considered for each of the servers $i \in [1,k]$, i.e., parameters $(\sigma_A,\rho_A)$ in Th.~\ref{th:gg1} are replaced by the parameters $(\sigma_{A_i},\rho_{A_i})$ of the thinned processes. The main effect of thinning is an increase of the parameters $\theta_i$ that result in a faster tail decay.

The thinned arrivals~\eqref{eq:randomthinning} are expressed as doubly random processes. Considering iid inter-arrival times the MGF of the thinned process is
\begin{equation*}
\mathsf{M}_{A_i(\nu,\nu+1)}(-\theta) = \mathsf{E} \left[(\mathsf{M}_{A(1,2)}(-\theta))^{X_i(1)} \right],
\end{equation*}
for $\nu \ge 1$. It follows after some reordering that
\begin{equation}
\mathsf{M}_{A_i(\nu,\nu+1)}(-\theta) = \mathsf{M}_{X_i(1)}(\ln \mathsf{M}_{A(1,2)}(-\theta)) .
\label{eq:doublystochasticarrivalmgf}
\end{equation}
Since $X_i(1)$ is a geometric random variable with MGF $\mathsf{M}_{X_i(1)}(\theta) = p_i e^{\theta}/ (1-(1-p_i)e^{\theta})$ for $\theta < -\ln(1-p_i)$, we obtain by insertion of~\eqref{eq:doublystochasticarrivalmgf} into~\eqref{eq:arrivalparameter} that
\begin{equation}
\rho_{A_i}(-\theta) = -\frac{1}{\theta} \ln \left( \frac{p_i \mathsf{M}_{A(1,2)}(-\theta)}{1-(1-p_i) \mathsf{M}_{A(1,2)}(-\theta) }\right) ,
\label{eq:splitinterarrivaltimes}
\end{equation}
where $\theta > 0$ so that $\mathsf{M}_{A(1,2)}(-\theta) < 1/(1-p_i)$.
%
%
\subsection{Deterministic Thinning}
In case of deterministic thinning, the random processes $X_i(m)$ and $Y_i(n)$ are replaced by deterministic functions. A round robin assignment of the jobs of an arrival process $A(n)$ to $k$ servers results in the processes $A_i(m)$ given by~\eqref{eq:randomthinning} where
\begin{equation}
X_i(m) = k(m-1)+i ,
\label{eq:splitarrivals}
\end{equation}
for $m \ge 1$ and $i \in [1,k]$. The combined in-sequence departure process $D(n)$ follows from~\eqref{eq:randomresequencing} with
\begin{equation}
Y_i(n) = \left\lceil \frac{n-i+1}{k} \right\rceil ,
\label{eq:mergeddepartures}
\end{equation}
for $n \ge 1$ and $i \in [1,k]$. To see this, note that job $n$ of the external arrival process becomes the $m=\lceil n/k\rceil$th job of server $j = (n-1) \! \mod k + 1$. Hence, $Y_i(n) = m$ for $i \le j$ and $Y_i(n) = m-1$ for $i > j$ due to the round robin procedure, as can be verified for~\eqref{eq:mergeddepartures}.

Statistical sojourn time bounds follow as in case of random thinning from~\eqref{eq:sojourntimethinning} as a variant of Th.~\ref{th:gg1}, where the parameters $(\sigma_{A_i}, \rho_{A_i})$ of the thinned arrival processes are used. Considering iid inter-arrival times, the MGFs of the deterministically thinned processes can be straightforwardly computed as $\mathsf{M}_{A_i(\nu,\nu+1)}(-\theta) = (\mathsf{M}_{A(1,2)}(-\theta))^k$ for $\nu \ge 1$. The parameters $\rho_{A_i}(-\theta)$ for $\theta > 0$ follow from~\eqref{eq:arrivalparameter} as
\begin{equation}
\rho_{A_i}(-\theta) = k \rho_A (-\theta) .
\label{eq:deterministicsplitinterarrivaltimes}
\end{equation}
For Markov arrival processes, an expression for deterministic thinning is given in~\cite{rizk:forkjoin}.
%
%
\paragraph*{$\text{E}_\text{k}$$\mid$M$\mid$1 Servers}
We consider arrivals $A(n)$ with iid exponential inter-arrival times with parameter $\lambda$. Deterministic thinning results in processes $A_i(n)$ where the inter-arrival times are a sum of $k$ exponential random variables that is Erlang-k distributed. It follows from~\eqref{eq:deterministicsplitinterarrivaltimes} for $\theta > 0$ that
\begin{equation*}
\rho_{A_i}(-\theta) = -\frac{k}{\theta} \ln \left(\frac{\lambda}{\lambda + \theta}\right).
\end{equation*}
\begin{figure}
  \centering
  \cond{short}{\includegraphics[width=0.6\columnwidth]{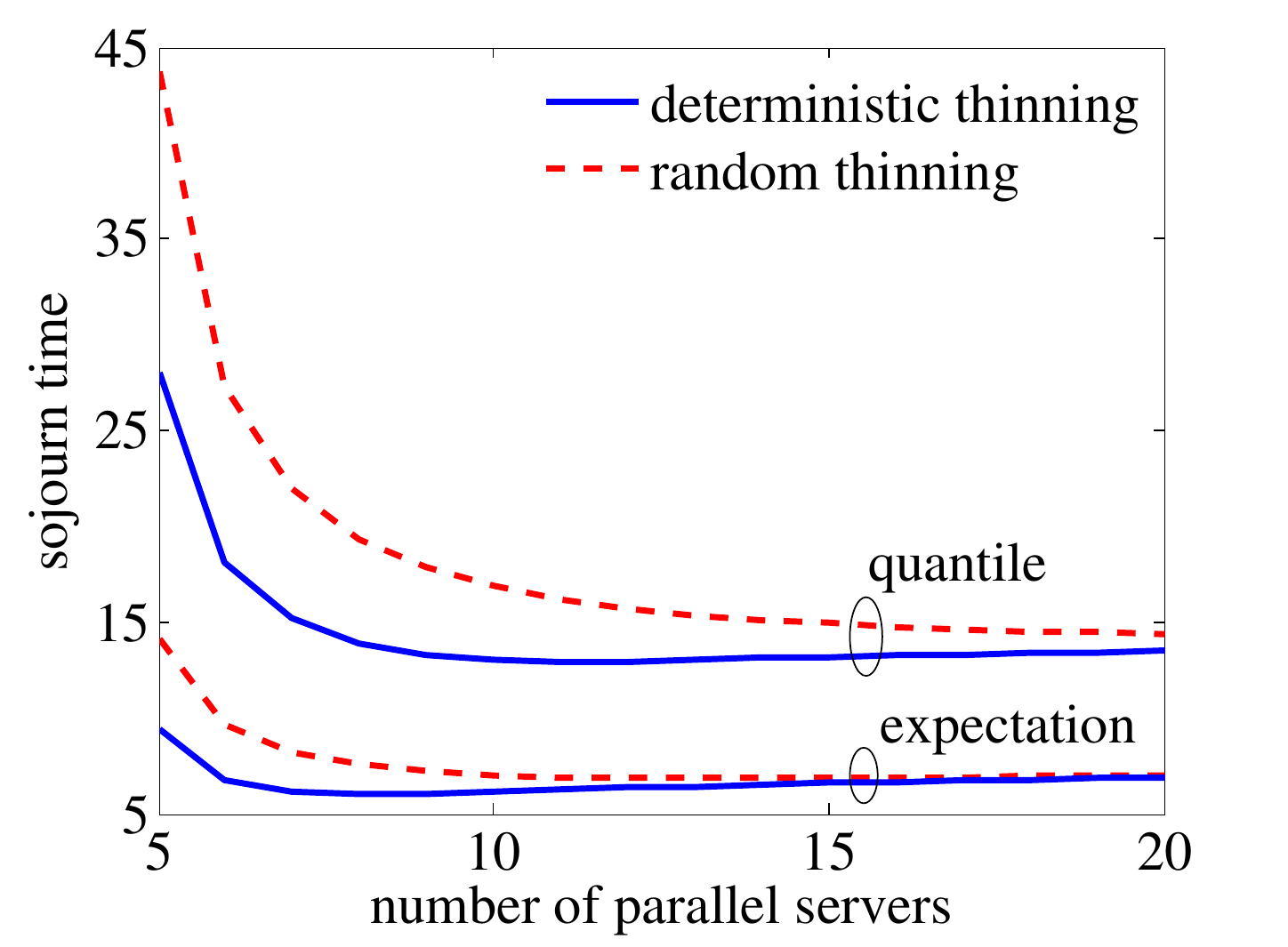}}
  \cond{long}{\includegraphics[width=0.75\columnwidth]{mm1thinningmu1lambda4}}
  \caption{Parallel servers achieve significant performance improvements if the utilization is high. Deterministic thinning outperforms random thinning.}
  \label{fig:thinning_mm1}
  \cond{short}{\vspace{-10pt}}
\end{figure}
In case of random thinning the inter-arrival times of the thinned processes are exponentially distributed with parameter $p_i \lambda$. With $p_i = 1/k$, we have from~\eqref{eq:splitinterarrivaltimes}  for $\theta > 0$ that
\begin{equation*}
\rho_{A_i}(-\theta) = - \frac{1}{\theta} \ln \left(\frac{\lambda}{\lambda + k \theta}\right) .
\end{equation*}
Given the service times at servers $i \in [1,k]$ are exponential with parameter $\mu_i = \mu$, we have $\rho_{S_i}(\theta) = \ln (\mu/(\mu-\theta))/\theta$ for $\theta \in (0,\mu)$. We choose the maximal parameter $\theta \in (0,\mu)$ such that $\rho_{S_i}(\theta) \le \rho_{A_i}(-\theta)$ for all $i \in [1,k]$ and obtain bounds of the sojourn time from~\eqref{eq:homogeneousparallelsojourntime} and~\eqref{eq:homogeneousparallelexpectedsojourntime}.

In Fig.~\ref{fig:thinning_mm1}, we compare deterministic and random thinning, where we show bounds of the expected sojourn time and of the sojourn time quantile with $\varepsilon = 10^{-3}$. The parameters are $\lambda=4$ and $\mu=1$, i.e., $k\ge 5$ parallel servers are required. Adding few more servers provides a significant advantage that is due to the decreasing utilization. The advantage diminishes, however, if $k$ becomes large and eventually the delay bounds start to grow due to potential waiting in the resequencing stage. \cond{long}{Note that Th.~\ref{th:gg1} does not assume independence of the parallel servers. }Deterministic thinning generally performs better than random thinning, where the advantage is larger in case of smaller $k$.
%
%
\subsection{Load Balancing}
We consider load balancing for heterogeneous servers and investigate two basic strategies: Divide the arrivals so that
\begin{enumerate}
  \item all servers have the same average utilization,
  \label{strt:splitmean}
  \item all servers have the same maximal tail decay.
  \label{strt:splittail}
\end{enumerate}
The strategies are similar to those that we considered for fork-join systems in Sec.~\ref{sec:loadbalacingforkjoin}. The difference is that we now thin the arrival process accordingly to achieve the goal.
\paragraph*{M$\mid$M$\mid$1 Servers}
We consider the case of exponential inter-arrival times with parameter $\lambda$ and exponential service times with parameters $\mu_i$ for servers $i \in [1,k]$. Random thinning is used to divide the arrivals into $k$ sub-processes with parameters $\lambda_i$ where $\sum_{i=1}^k \lambda_i=\lambda$. For strategy~\ref{strt:splitmean} the target is to achieve $\lambda_i/\mu_i = \lambda_j/\mu_j$ for all $i,j \in [1,k]$. It follows that $\lambda_i = \lambda \mu_i/(\sum_{j=1}^k \mu_j)$ for all $i \in [1,k]$.
Strategy~\ref{strt:splittail} seeks to equalize the maximal tail decay parameters $\theta_i = \mu_i - \lambda_i$, see~\eqref{eq:mm1parameters}, so that $\theta_i = \theta_j$ for all $i,j \in [1,k]$. A solution can be obtained iteratively as $\lambda_i = \mu_i - (\sum_{j=1}^k \mu_j - \lambda)/k$ for all $i \in [1,k]$. Servers that are assigned $\lambda_i < 0$ are excluded in the next iteration until all $\lambda_i$ are non-negative.

\begin{figure}
  \centering
  \hspace{-10pt}
  \subfigure[ $\lambda=0.4, \mu_1=1$]{
    \includegraphics[width=0.5\columnwidth]{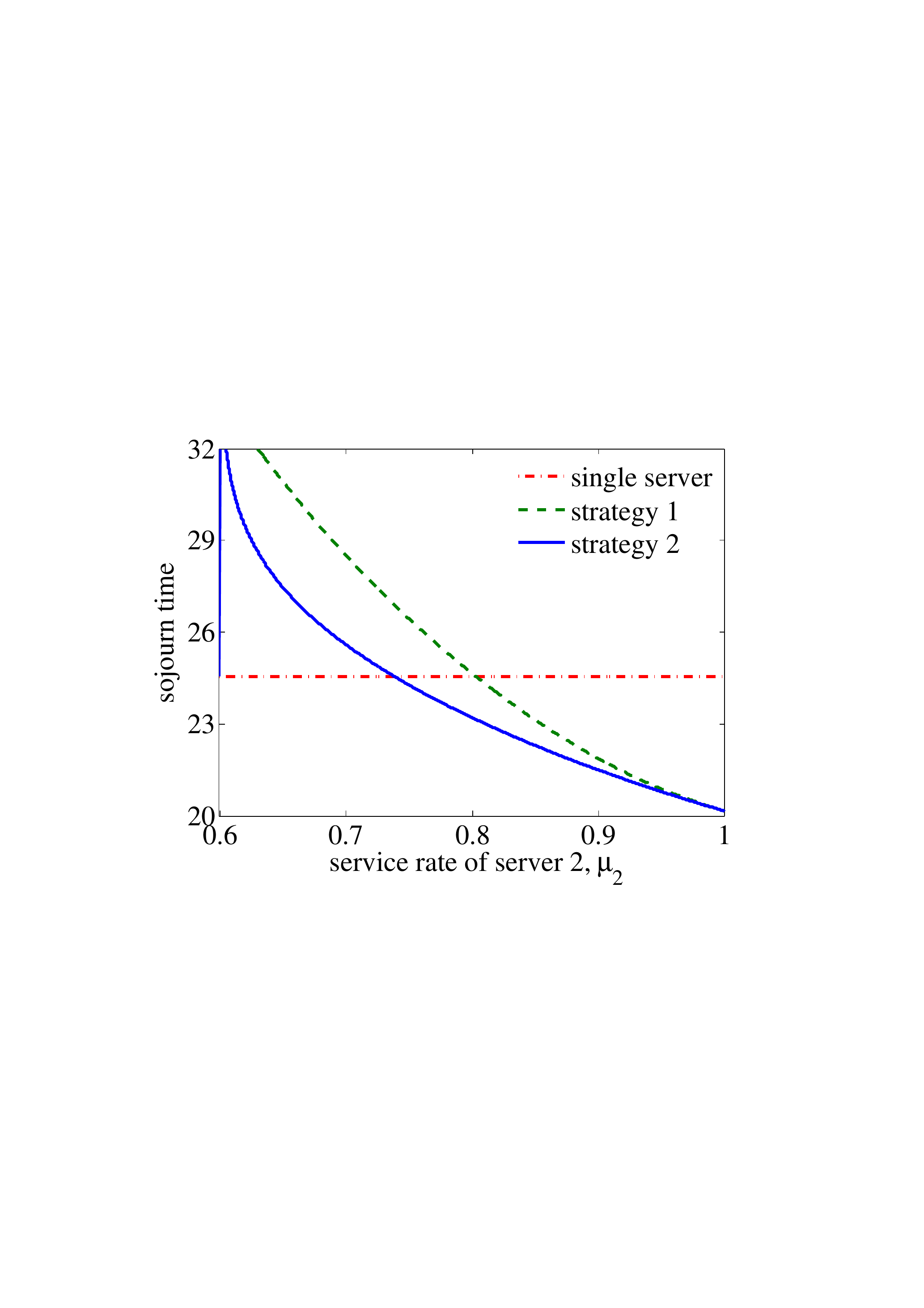}
  }
  \hspace{-15pt}
  \subfigure[ $\lambda=0.8, \mu_1=1$ ]{
    \includegraphics[width=0.5\columnwidth]{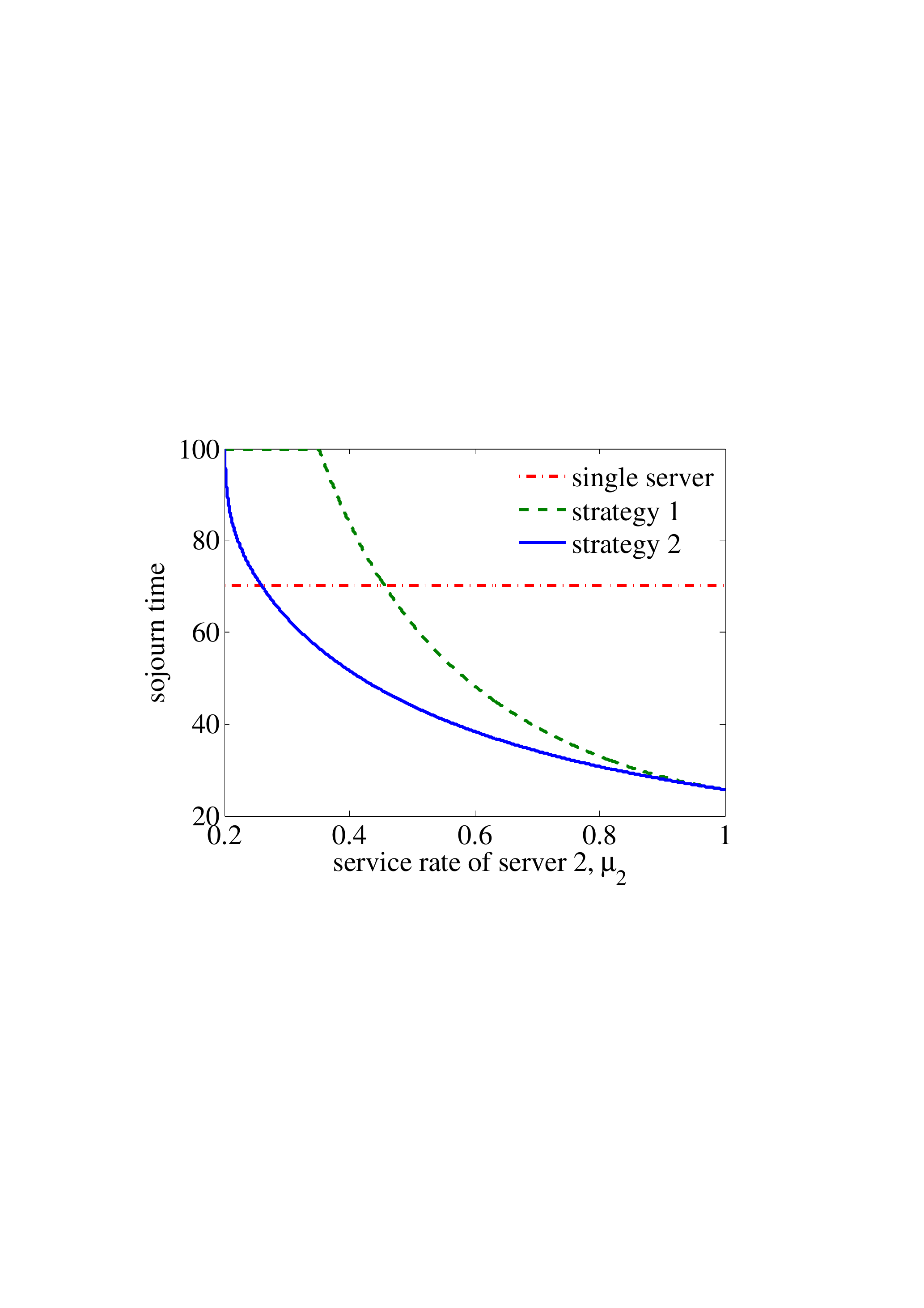}
  }
  \hspace{-10pt}
  \caption{The effectiveness of load balancing depends largely on the utilization.}
  \label{fig:thinning_mm1_loadbalancing}
  \cond{short}{\vspace{-10pt}}
\end{figure}
Fig.~\ref{fig:thinning_mm1_loadbalancing} shows bounds of the sojourn time for $\varepsilon = 10^{-6}$. We consider $k=2$ servers with parameters $\mu_1=1$, $\mu_2 \in [0.5,1]$, and $\lambda = 0.4$ and $0.8$, respectively. A practical application of the model is, e.g., the multi-path transmission of a packet data stream over Wifi and cellular in parallel. For comparison the sojourn time for the case without load balancing, where only server one is used, is shown. Clearly, load balancing performs very well if the utilization is high. On the other hand, an advantage is achieved only if the capacity of server two is sufficiently large, i.e., in certain cases it is better to exclude slow servers. Strategy~\ref{strt:splittail} performs better than strategy~\ref{strt:splitmean} as the allocation takes the tail decay of the sojourn time into account. If the servers are homogeneous, strategies~\ref{strt:splitmean} and~\ref{strt:splittail} are identical.
%
%
\section{Advanced Fork-Join Systems and Networks}
\label{sec:advanced}
Next, we extend the methods from Sec.~\ref{sec:forkjoin} to include $(k,l)$ fork-join systems as well as multi-stage fork-join networks.
%
%
\subsection{$(k,l)$ Fork-Join Systems}
\label{sec:klforkjoin}
An important generalization of fork-join systems are $(k,l)$ fork-join systems, where a job is completed once $l$ out of the $k$ tasks of a job are finished. An application is the download of coded media from distributed storage systems~\cite{joshi:knforkjoin}. Compared to~\cite{joshi:knforkjoin}, we do not assume that the remaining $k-l$ unfinished tasks of a job are dropped once $l$ tasks of the job are finished. This applies, e.g., in case of the transmission of encoded redundant data streams over multiple paths. \cond{long}{Another practical example are systems with redundant jobs that correspond to $(k,1)$ fork-join systems.}

The advantage of $(k,l)$ fork-join systems becomes evident if the service times at each of the $k$ parallel servers are statistically independent. The result of Th.~\ref{th:gg1} does, however, not use an assumption of independence as the individual waiting and sojourn times of the parallel servers are stochastically dependent on the common arrival process~\cite{baccelli:forkjoin, kemper:forkjoin}. Hence, in order to take advantage of independent service times $S_i(n)$ for servers $i \in [1,k]$, we require a service characterization for each of the servers that can be composed using the independence assumption. We derive a suitable model using statistical envelope functions of the type of~\cite{yin:generalizedstochasticallyboundedburstiness}.
\begin{definition}
\label{def:envelopes}
An arrival process $A(m,n)$ has a statistical sample path envelope $\rho_A$ with error profile $\varepsilon_A(\tau_A)$ for $\tau_A \ge 0$ if it holds for all $n \ge 1$ that
\begin{equation*}
\mathsf{P} \left[ \max_{\nu \in [1,n]} \{\rho_A (n-\nu) - A(\nu,n)\} > \tau_A \right] \le \varepsilon_A(\tau_A) .
\end{equation*}
Similarly, a service process $S(m,n)$ has envelope $\rho_S$ with error profile $\varepsilon_S(\tau_S)$ for $\tau_S \ge 0$ if it holds for all $n \ge 1$ that
\begin{equation*}
\mathsf{P} \left[ \max_{\nu \in [1,n]} \{S(\nu,n) - \rho_S (n-\nu+1)\} > \tau_S \right] \le \varepsilon_S(\tau_S) .
\end{equation*}
\end{definition}
\begin{corollary}
\label{cor:errorprofiles}
Given iid inter-arrival and service times with parameters $\rho_A(-\theta_A)$~\eqref{eq:arrivalparameter} and $\rho_S(\theta_S)$~\eqref{eq:serviceparameter} for $\theta_A, \theta_S > 0$. It follows that $\rho_A(-\theta_A)$ and $\rho_S(\theta_S)$ satisfy Def.~\ref{def:envelopes} with error profile $\varepsilon_A(\tau_A) = e^{-\theta_A \tau_A}$ and $\varepsilon_S(\tau_S) = e^{-\theta_S \tau_S}$, respectively.
\end{corollary}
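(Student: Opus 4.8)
The plan is to show that an exponential tail bound on each individual term of the sample‑path maximum, combined with the geometric spacing of the ``rates'' $\rho_A$, $\rho_S$, telescopes into a single exponential error profile with no loss in the exponent. I will treat the arrival case; the service case is symmetric. Fix $n\ge 1$ and abbreviate $Z_\nu := \rho_A(-\theta_A)(n-\nu) - A(\nu,n)$ for $\nu\in[1,n]$. By a union bound over $\nu$,
\begin{equation*}
\mathsf{P}\Bigl[\max_{\nu\in[1,n]}Z_\nu > \tau_A\Bigr] \le \sum_{\nu=1}^{n} \mathsf{P}[Z_\nu > \tau_A].
\end{equation*}
For a fixed $\nu$, Chernoff's bound with parameter $\theta_A>0$ gives $\mathsf{P}[Z_\nu>\tau_A]\le e^{-\theta_A\tau_A}\,\mathsf{E}[e^{\theta_A Z_\nu}] = e^{-\theta_A\tau_A}\,e^{\theta_A\rho_A(-\theta_A)(n-\nu)}\,\mathsf{E}[e^{-\theta_A A(\nu,n)}]$. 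By the iid assumption and~\eqref{eq:arrivalincrements}, $\mathsf{E}[e^{-\theta_A A(\nu,n)}] = \bigl(\mathsf{E}[e^{-\theta_A A(1,2)}]\bigr)^{n-\nu} = e^{-\theta_A\rho_A(-\theta_A)(n-\nu)}$ by the definition~\eqref{eq:arrivalparameter} of $\rho_A(-\theta_A)$. Hence the two factors involving $n-\nu$ cancel \emph{exactly}, leaving $\mathsf{P}[Z_\nu>\tau_A]\le e^{-\theta_A\tau_A}$ for every $\nu$.

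At this point the naive union bound only gives $n\,e^{-\theta_A\tau_A}$, which is not what is claimed, so the real work is to recover the clean $e^{-\theta_A\tau_A}$. The standard device is to split the threshold: instead of charging every term the full $\tau_A$, I allocate to the $\nu$‑th term (counting from $\nu=n$ downward, so $j=n-\nu\ge 0$) a threshold $\tau_A + j\delta$ for some slack $\delta>0$ strictly smaller than the stability gap, and absorb the remaining geometric deficit. Concretely, pick any $\theta_A' \in (0,\theta_A)$; redo the Chernoff step at rate $\theta_A'$ but keep the envelope rate $\rho_A(-\theta_A)$. Since $\rho_A(-\theta_A') > \rho_A(-\theta_A)$ (monotonicity of $\rho_A(-\cdot)$, stated in the text after~\eqref{eq:serviceparameter}), the cancellation now \emph{over}‑compensates: $\mathsf{E}[e^{\theta_A' Z_\nu}] = e^{-\theta_A'(\rho_A(-\theta_A')-\rho_A(-\theta_A))(n-\nu)}\le r^{\,n-\nu}$ with $r := e^{-\theta_A'(\rho_A(-\theta_A')-\rho_A(-\theta_A))} < 1$. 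Then
\begin{equation*}
\mathsf{P}\Bigl[\max_{\nu}Z_\nu>\tau_A\Bigr] \le \sum_{j=0}^{n-1} e^{-\theta_A'\tau_A} r^{\,j} \le \frac{e^{-\theta_A'\tau_A}}{1-r},
\end{equation*}
which already has the exponential shape but with a worse rate $\theta_A'$ and a prefactor $1/(1-r)>1$ — essentially the $(\sigma,\rho)$/G$\mid$G$\mid$1 form of Def.~\ref{def:sigmarho} rather than the sharp GI bound.

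The cleanest route to the \emph{stated} sharp constant (prefactor $1$, rate $\theta_A$) is to avoid the union bound entirely and instead invoke Doob's maximal inequality on a martingale, exactly as the text says is done for Theorem~\ref{th:gg1} and as in Kingman's argument~\cite{kingman:gg1}: because the inter‑arrival increments are iid, the process $M_j := e^{\theta_A Z_{n-j}} = \prod_{i}\bigl(e^{\theta_A(\rho_A(-\theta_A) - A(\cdot,\cdot))}\bigr)$ is a nonnegative martingale (each factor has mean $1$ by~\eqref{eq:arrivalparameter}), so Doob gives $\mathsf{P}[\max_j M_j > e^{\theta_A\tau_A}] \le e^{-\theta_A\tau_A}\mathsf{E}[M_0] = e^{-\theta_A\tau_A}$, i.e.\ $\varepsilon_A(\tau_A) = e^{-\theta_A\tau_A}$. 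The service case is identical after replacing $\theta_A$ by $\theta_S$, $A$ by $S$, and using that the envelope uses $n-\nu+1$ terms so that the martingale is built from the $S(\nu)$ increments with mean‑one factors $e^{\theta_S(\rho_S(\theta_S)-S(\nu))}$ by~\eqref{eq:serviceparameter}. The main obstacle is precisely this last point — getting the prefactor down to exactly $1$ rather than the harmless constant $1/(1-r)$ — and it is resolved by the martingale/Doob argument rather than a union bound; everything else is the exact MGF cancellation that the definitions~\eqref{eq:arrivalparameter}–\eqref{eq:serviceparameter} were set up to make automatic.
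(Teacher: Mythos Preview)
Your final argument is correct and is exactly the paper's proof: define the exponential process $U(m)=e^{\theta_A(\rho_A(-\theta_A)(m-1)-A(n-m+1,n))}$ (your $M_j$), observe that by~\eqref{eq:arrivalparameter} each iid factor has mean one so $U$ is a nonnegative martingale, and apply Doob's maximal inequality with $\mathsf{E}[U(1)]=1$ to obtain $\varepsilon_A(\tau_A)=e^{-\theta_A\tau_A}$; the service case is symmetric. The union-bound and slack-$\theta_A'$ detours you present first are unnecessary and only yield the weaker G$\mid$G$\mid$1-type bound---you may drop them and go straight to the martingale step.
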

\cond{short}{The proof is a variation of the GI$\mid$GI$\mid$1 case of Th.~\ref{th:gg1}. We note that a similar envelope characterization that differs mainly by a pre-factor of the error profile can also be obtained for the G$\mid$G$\mid$1 case. We omit the result for reasons of space.}\cond{long}{The proof is a variation of the GI$\mid$GI$\mid$1 case of Th.~\ref{th:gg1}. It is given in the appendix. We note that a similar envelope characterization that differs mainly by a pre-factor of the error profile can also be obtained for the G$\mid$G$\mid$1 case, see~\cite{jiang:maxplus} for a general derivation. We omit the result for reasons of space.}

The important quality of the envelopes specified in Def.~\ref{def:envelopes} is the consideration of sample paths of the arrivals and the service by use of the $\max$ operator.
\cond{long}{As a result, the service envelope can be reformulated as
\begin{equation*}
\mathsf{P} \left[ \forall \nu \in [1,n] : S(\nu,n) \le \rho_S(n-\nu+1) + \tau_S \right] \ge 1-\varepsilon_S(\tau_S).
\end{equation*}
By substitution of the envelope for $S(\nu,n)$ into~\eqref{eq:maxplusserviceprocess} we obtain
\begin{multline*}
\mathsf{P} \left[  D(n) \le \max_{\nu \in [1,n]} \{ A(\nu) + \rho_{S} (n-\nu+1) \} + \tau_{S} \right] \\ \ge 1-\varepsilon_S(\tau_{S}) ,
\end{multline*}
so that finally
\begin{multline}
\mathsf{P} \left[ D(n) > \max_{\nu \in [1,n]} \{ A(\nu) + \rho_{S} (n-\nu+1) \} + \tau_{S} \right] \\ \le \varepsilon_S(\tau_{S}) ,
\label{eq:statisticalservicecurve}
\end{multline}
for $n \ge 1$.}\cond{short}{As a result, the service envelope can be substituted for $S(\nu,n)$ into~\eqref{eq:exactmaxplusserviceprocess} to obtain
\begin{equation}
\mathsf{P} \left[ D(n) > \max_{\nu \in [1,n]} \{ A(\nu) + \rho_{S} (n-\nu+1) \} + \tau_{S} \right] \le \varepsilon_S(\tau_{S}) ,
\label{eq:statisticalservicecurve}
\end{equation}
for $n \ge 1$.}
\cond{long}{By insertion of~\eqref{eq:statisticalservicecurve} into $T(n) = D(n)-A(n)$ a statistical bound of the sojourn time follows for $n \ge 1$ as
\begin{equation*}
\mathsf{P}\!\left[T(n) \! > \!\! \max_{\nu \in [1,n]} \{\rho_S (n-\nu+1) - A(\nu,n) \} + \tau_S \right] \le \varepsilon_S(\tau_S) .
\end{equation*}
A similar substitution of the arrival envelope from Def.~\ref{def:envelopes} for $A(\nu,n)$ and using the union bound yields
\begin{multline*}
\!\!\mathsf{P}\!\left[T(n) \!>\! \max_{\nu \in [1,n]} \{\rho_S (n-\nu+1) - \rho_A(n-\nu) \} + \tau_A + \tau_S \right] \\ \le \varepsilon_A(\tau_A) + \varepsilon_S(\tau_S) ,
\end{multline*}
so that for $\rho_S \le \rho_A$ we have
\begin{equation}
\mathsf{P}[T(n) > \tau_A + \tau_S + \rho_S] \le \varepsilon_A(\tau_A) + \varepsilon_S(\tau_S) .
\label{eq:sojourntimeenvelopes}
\end{equation}}
\cond{short}{By insertion of~\eqref{eq:statisticalservicecurve} into $T(n) = D(n)-A(n)$ and substitution of the arrival envelope from Def.~\ref{def:envelopes} for $A(\nu,n)$ a statistical bound of the sojourn time follows for $n \ge 1$ as
\begin{equation}
\mathsf{P}[T(n) > \tau_A + \tau_S + \rho_S] \le \varepsilon_A(\tau_A) + \varepsilon_S(\tau_S),
\label{eq:sojourntimeenvelopes}
\end{equation}
for $\rho_S \le \rho_A$. }Finally, Cor.~\ref{cor:errorprofiles} gives $\varepsilon_A(\tau_A) = e^{-\theta_A \tau_A}$ and $\varepsilon_S(\tau_S) = e^{-\theta_S \tau_S}$ for any $\theta_A,\theta_S > 0$ that satisfy $\rho_S(\theta_S) \le \rho_A(-\theta_A)$. We note that~\eqref{eq:sojourntimeenvelopes} does not assume independence of the arrivals and the service.

Considering $(k,l)$ fork-join systems, job $n \ge 1$ departs once $l$ out of its $k$ tasks are finished. The departure time follows as
\begin{equation}
D(n) = \min_{C \in \mathbb{C}(k,l)} \left \{ \max_{i \in C} \{ D_i(n) \} \right\},
\label{eq:lkforkjoindepartures}
\end{equation}
where $\mathbb{C}(k,l)$ is the set of all combinations of size $l$ that can be chosen from $k$. For the special case of a standard fork-join system we have $l=k$ so that $\mathbb{C}$ comprises one combination that contains all $i \in [1,k]$ so that~\eqref{eq:lkforkjoindepartures} becomes $D(n) = \max_{i \in [1,k]} \{ D_i(n) \}$. On the other hand, for a system with $k$ redundant jobs we have $l=1$ and $\mathbb{C}$ comprises $k$ different combinations that each contain one element, so that $D(n) = \min_{i \in [1,k]} \{ D_i(n) \}$. In general,~\eqref{eq:lkforkjoindepartures} considers the $l$th order statistic of $D(n)$.

\begin{figure}
  \centering
  \hspace{-10pt}
  \subfigure[ $k \in \{10,15\}, l=10$ ]{
    \includegraphics[width=0.5\columnwidth]{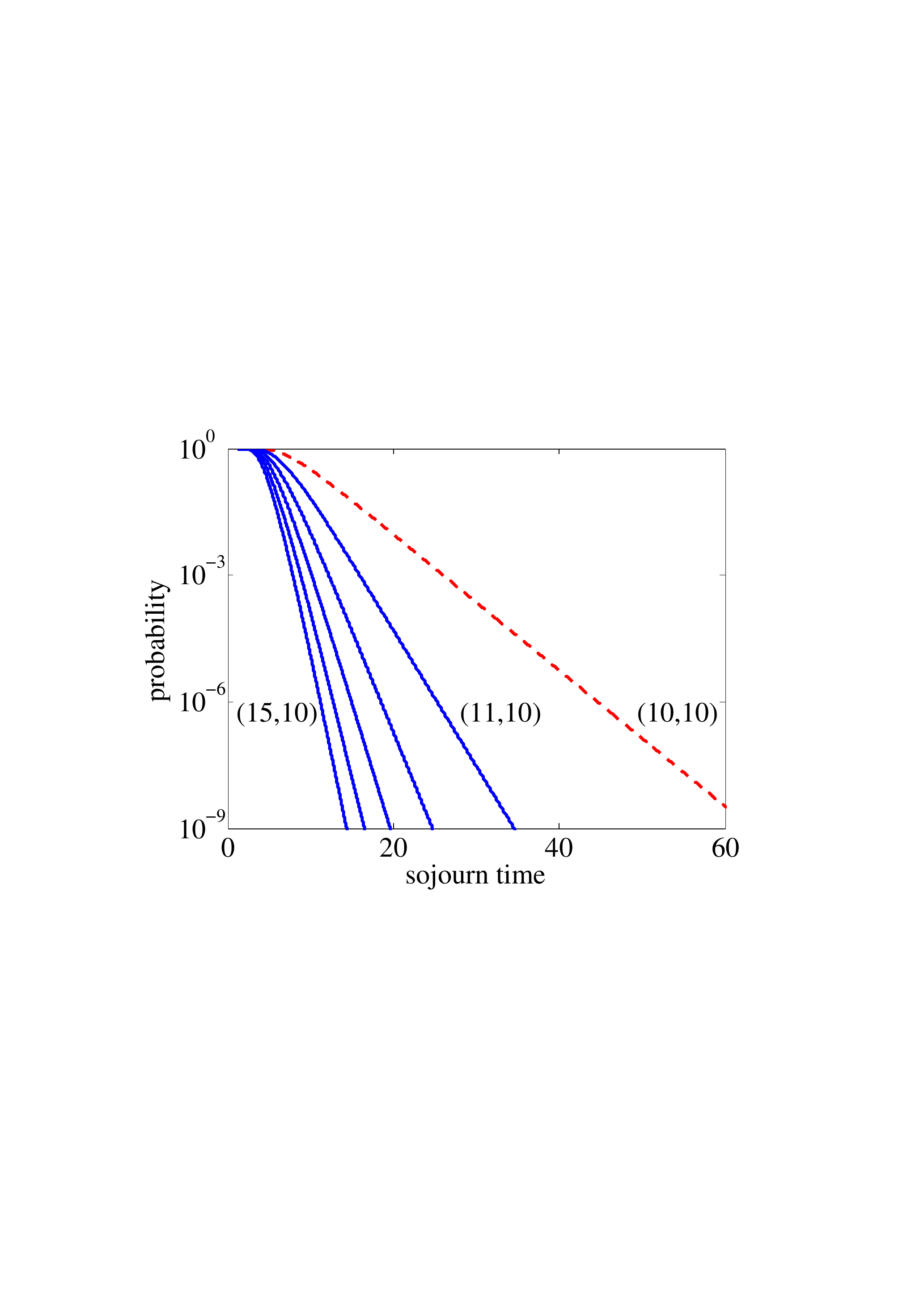}
  }
  \hspace{-15pt}
  \subfigure[ $k \in \{l,l+5\}, \varepsilon = 10^{-6}$ ]{
    \includegraphics[width=0.5\columnwidth]{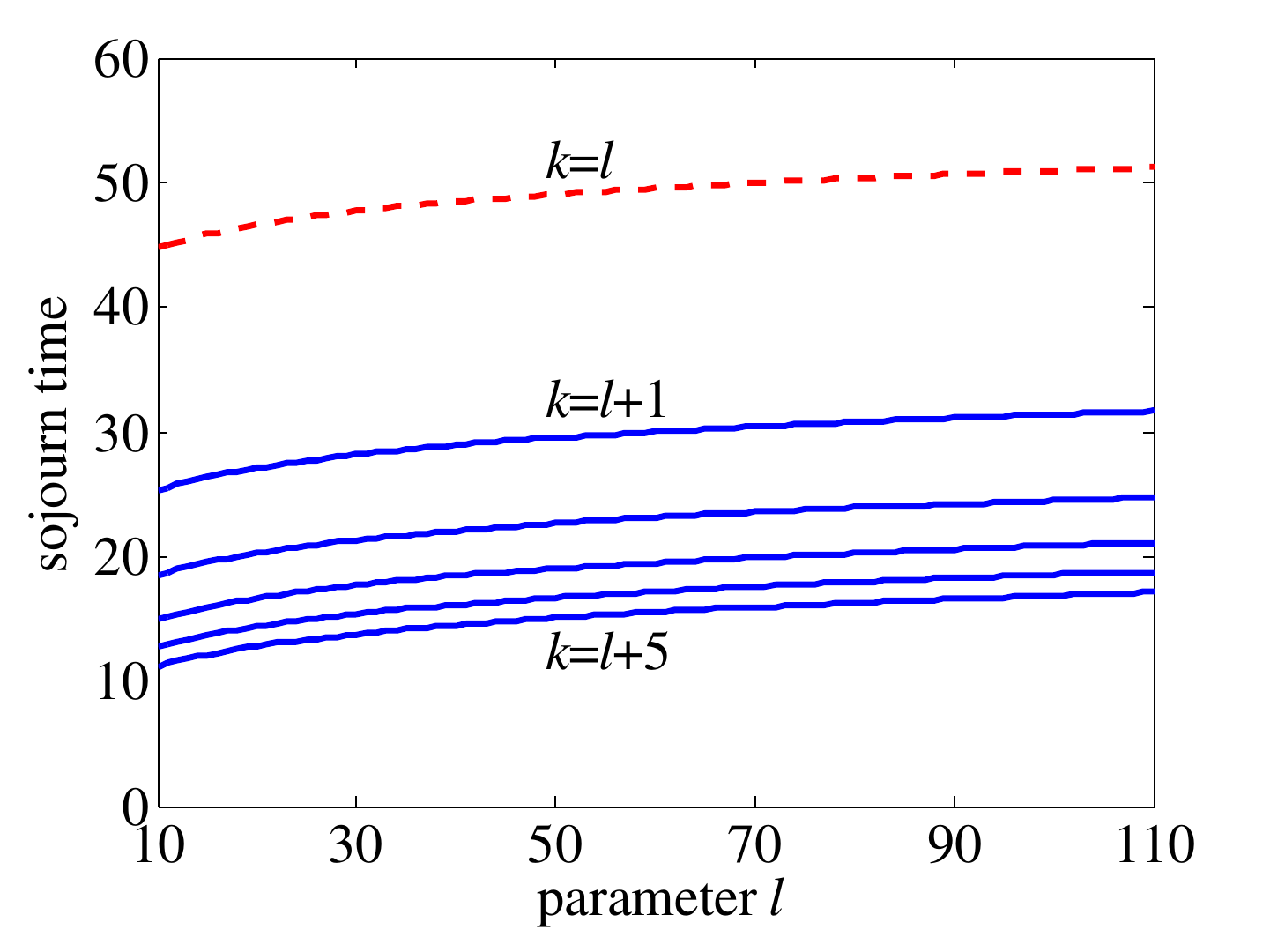}
  }
  \hspace{-10pt}
  \caption{$(k,l)$ fork-join systems achieve a faster tail decay. Increasing redundancy realizes a diminishing improvement.}
  \label{fig:klforkjoin_dm1}
  \cond{short}{\vspace{-10pt}}
\end{figure}
Next, we derive a guarantee of the type of~\eqref{eq:statisticalservicecurve} for $D(n)$ as defined by~\eqref{eq:lkforkjoindepartures}. Given a service envelope for each of the $k$ parallel servers $S_i(\nu,n)$ as in Def.~\ref{def:envelopes}, the departures $D_i(n)$ satisfy~\eqref{eq:statisticalservicecurve} for $n \ge 1$ and $i \in [1,k]$. Since~\eqref{eq:statisticalservicecurve} takes the form $\mathsf{P}[D_i(n) > x] \le p_i$ and conversely $\mathsf{P}[D_i(n) \le x] \ge 1- p_i$,~\eqref{eq:statisticalservicecurve} can be viewed as independent Bernoulli random variables, assuming independence of $S_i(\nu,n)$ for $i \in [1,k]$. In the homogeneous case we have $\rho_{S_i} = \rho_S$ and $p_i = p$ for all $i \in [1,k]$. Since the number of successful Bernoulli trials is binomial, we obtain from~\eqref{eq:lkforkjoindepartures} for $n \ge 1$ that
\begin{equation}
\mathsf{P} \left[  D(n) > x \} \right] \le \sum_{j=0}^{l-1} {k \choose j} (1-p)^j p^{k-j} =: \varepsilon_{(k,l)}(\tau_S),
\label{eq:departuresklforkjoin}
\end{equation}
where $x=\max_{\nu \in [1,n]} \{ A(\nu) + \rho_{S} (n-\nu+1)\} + \tau_S$ and $p = e^{-\theta_S \tau_S}$ from Cor.~\ref{cor:errorprofiles}. A statistical bound of the sojourn time follows from~\eqref{eq:sojourntimeenvelopes} by substitution of $\varepsilon_{(k,l)}(\tau_S)$ for $\varepsilon_S(\tau_S)$.
%
%
\paragraph*{D$\mid$M$\mid$1 Servers}
To focus only on the effect of $(k,l)$ fork-join systems, we use deterministic inter-arrival times $\rho_A$, i.e, $A(\nu,n) = \rho_A (n-\nu)$. The service times are iid exponential with parameters $\mu_i = \mu$ for all servers $i \in [1,k]$ so that $\rho_{S}(\theta_S) = \ln(\mu/(\mu-\theta_S))/\theta_S$ for $\theta_S \in (0,\mu)$ from~\eqref{eq:mm1parameters}. We choose the largest $\theta_S$ that satisfies $\rho_S(\theta_S) \le \rho_A$. In Fig.~\ref{fig:klforkjoin_dm1}, we show the tail decay of the sojourn time~\eqref{eq:sojourntimeenvelopes} where the $(k,l)$ fork-join system is governed by~\eqref{eq:departuresklforkjoin}. The parameters are $\rho_A = 1.25$ and $\mu = 1$. Fig.~\ref{fig:klforkjoin_dm1} shows how redundancy, i.e, $k > l$, improves the sojourn time. It becomes apparent that few redundant servers can achieve a significant advantage. Adding more redundancy realizes a diminishing improvement.
%
%
\paragraph*{Latency-rate Servers}
The system model~\eqref{eq:statisticalservicecurve} takes the basic form of a latency-rate function. It can be parameterized by $1/\rho_S$ that has the interpretation of a service rate, and $\tau_S$ that is a latency with error profile $\varepsilon_S(\tau_S)$. We consider $k=2$ parallel servers, each with $\rho_S=1$ and $\varepsilon_S(\tau_S) = e^{-\kappa \tau_S}$ where $\kappa > 0$ is the speed of the tail decay that determines the average latency $1/\kappa$. An example is a packet data stream that can be routed via two different paths. The arrivals have exponential inter-arrival times with parameter $\lambda = 0.7$. In Fig.~\ref{fig:latencyrateforkjoin}, we show bounds of the sojourn time with $\varepsilon = 10^{-6}$, where we use $\kappa$ as a parameter to evaluate three fundamental strategies: 1) use of only one server; 2) use of both servers in parallel with deterministic thinning and resequencing; 3) use of both servers as a $(2,1)$ fork-join system, i.e., with 1-out-of-2 redundancy.

\begin{figure}
\centering
\cond{short}{\includegraphics[width=0.6\columnwidth]{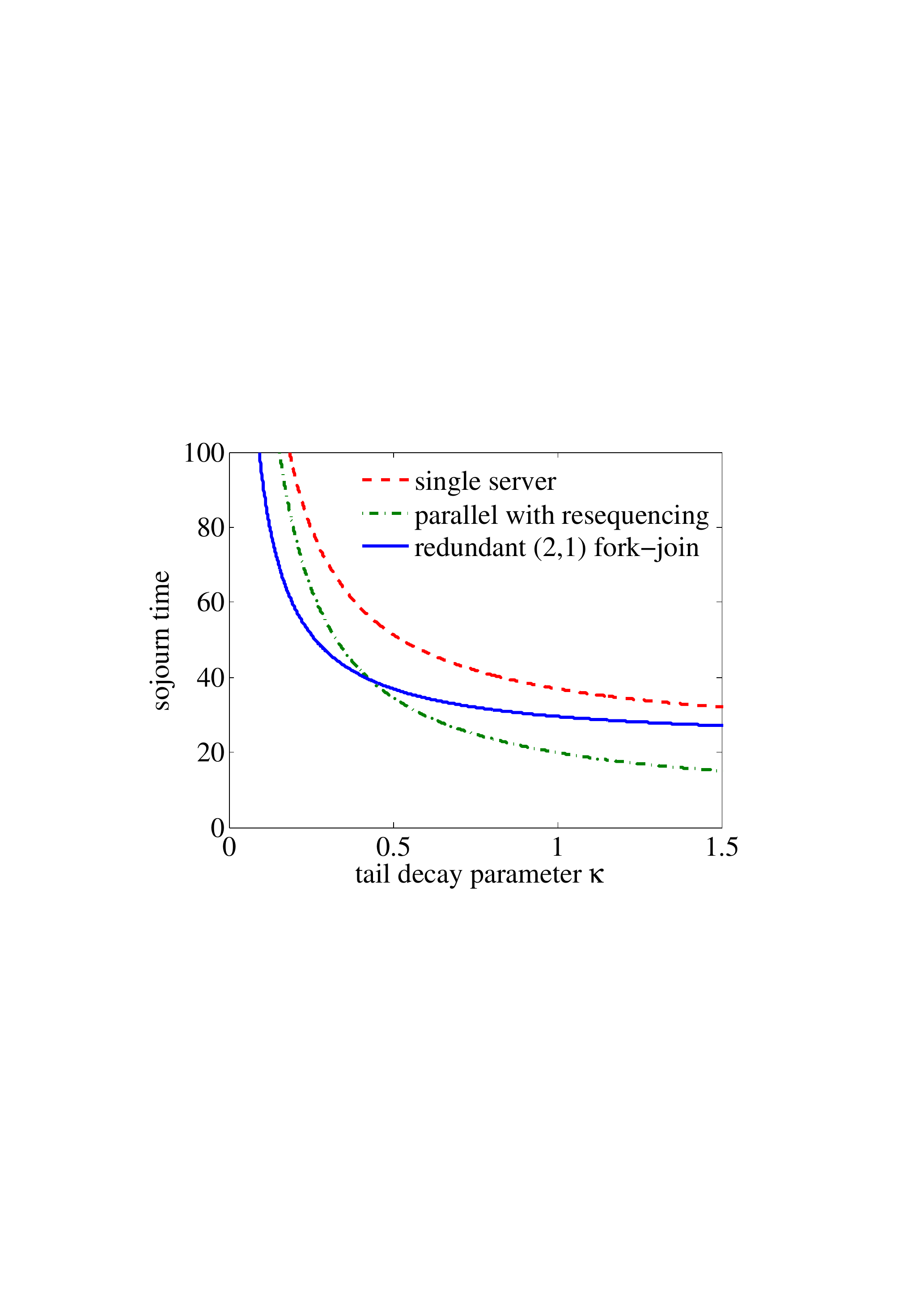}}
\cond{long}{\includegraphics[width=0.75\columnwidth]{latencyrateforkjoin}}
\caption{Two parallel latency-rate servers. The optimal strategy, thinning and resequencing or $(2,1)$ fork-join, depends on the tail decay parameter $\kappa$.}
\cond{short}{\vspace{-10pt}}
\label{fig:latencyrateforkjoin}
\end{figure}
If $\kappa$ is small, the servers frequently experience large delays. Since these delays occur independently for each of the two servers, the redundancy of the $(2,1)$ fork-join system provides a significant advantage. In contrast, if $\kappa$ is large, delays are mostly due to the variability of the arrivals. In case of the $(2,1)$ fork-join system, this affects both servers in the same way, so that the redundant server provides little utility. In this case, the thinning strategy is clearly superior as it benefits from load balancing compared to the two other strategies. Thinning can, however, only combat delays that are due to the variability and the load caused by the arrivals. In contrast, in case of small $\kappa$, thinning and resequencing can result in significant additional delays, as jobs have to wait more frequently at the resequencing stage to ensure in-order delivery. This effect will eventually, if $\kappa$ is reduced further, result in increased sojourn times compared to the single server system.
%
%
\subsection{Multi-Stage Fork-Join Networks}
We derive end-to-end results for multi-stage fork-join networks, where we consider $h$ fork-join stages in tandem, each with $k$ parallel servers. We use subscript $i \in [1,k]$ to distinguish the servers of a stage and superscript $j \in [1,h]$ to denote the stages. We employ the model from Sec.~\ref{sec:klforkjoin} and specify each server by~\eqref{eq:statisticalservicecurve} with parameters $\rho_{S_i}$ and $\varepsilon_{S_i}(\tau_{S})$. For notational convenience, we consider homogeneous servers, i.e., the service processes at each of the servers have the same distribution so that we can drop the subscript $i$. It follows that each stage also satisfies~\eqref{eq:statisticalservicecurve} with parameters $\rho_{stg} = \rho_{S}$ and $\varepsilon_{stg}(\tau_{S})$. In general, we have $\varepsilon_{stg}(\tau_{S}) = k \varepsilon_{S}(\tau_{S})$ from the union bound and for $(k,l)$ fork-join systems with independent parallel servers $\varepsilon_{stg}(\tau_{S}) = \varepsilon_{(k,l)}(\tau_S)$ from~\eqref{eq:departuresklforkjoin}.

In a tandem of fork-join stages, the departures of stage $j$ become the arrivals of stage $j+1$, i.e., $A^{j+1}(n) = D^{j}(n)$ for $j \in [1,h-1]$. As $D^j(n)$ is given by~\eqref{eq:statisticalservicecurve} for all $j \in[1,h]$, it would appear that a solution can be obtained by recursive insertion of~\eqref{eq:statisticalservicecurve} for each stage. The difficulty is, however, that~\eqref{eq:statisticalservicecurve} evaluates sample paths of $A^j(n)$ but does not provide a sample path guarantee for $D^j(n)$, see~\cite{burchard:endtoendstatisticalcalculus} for a discussion of the challenges. To evaluate sample paths of $D^j(n)$ we adapt the basic method from~\cite{ciucu:networkservicecurvescaling2} to max-plus systems and define a sample path guarantee for $D^j(n)$ for any $m \ge 1$ as
\begin{multline}
\mathsf{P} \biggl[ \exists n \in [1,m] : D^j(n) > \max_{\nu \in [1,n]} \{ A^j(\nu) + \rho_{S} (n-\nu+1) \} \\ + \delta (m-n+1) + \tau_S \biggr] \le \varepsilon_{stg}^{\delta}(\tau_S),
\label{eq:singleforkjoinsystempathwise}
\end{multline}
where $\delta > 0$ is a free parameter. Given~\eqref{eq:statisticalservicecurve} holds for stage $j$ with $\varepsilon_{stg} (\tau_S)$ as defined above, we derive \cond{long}{by the union bound that}
\begin{equation}
\varepsilon_{stg}^{\delta} (\tau_S) = \sum_{n=1}^m \varepsilon_{stg} (\tau_S + \delta (m-n+1)) .
\label{eq:samplepathservicecurveerrorprofile}
\end{equation}
\cond{long}{Using~\eqref{eq:singleforkjoinsystempathwise}, a characterization of the end-to-end service of a network of $h$ stages can be derived recursively, starting from the last stage.
For stage $h$, we obtain from~\eqref{eq:singleforkjoinsystempathwise}\footnote{Alternatively, for the special case of stage $h$,~\eqref{eq:statisticalservicecurve} can be used to obtain a tighter guarantee. We use~\eqref{eq:singleforkjoinsystempathwise} here for notational simplicity.} for $n=m$ and $m \ge 1$ that
\begin{multline*}
\mathsf{P} \biggl[ D^h(m) > \max_{\nu \in [1,m]} \{ A^h(\nu) + \rho_{S} (m-\nu+1) \} \\ + \delta + \tau_S \biggr] \le \varepsilon_{stg}^{\delta}(\tau_S).
\end{multline*}
Since $A^h(\nu) = D^{h-1}(\nu)$ and $\nu \in [1,m]$ we can use~\eqref{eq:singleforkjoinsystempathwise} again, to derive
\begin{multline*}
\mathsf{P} \biggl[ D^h(m) > \max_{\nu \in [1,m]} \biggl\{ \max_{\vartheta \in [1,\nu]} \{A^{h-1}(\vartheta) + \rho_S (\nu-\vartheta+1) \} \\ + \delta (m-\nu+1) + \tau_S + \rho_{S} (m-\nu+1) \biggr\} + \delta + \tau_S \biggr] \le 2\varepsilon_{stg}^{\delta}(\tau_S) ,
\end{multline*}
where we used the union bound to estimate the probability. After some reordering we estimate $\delta (m-\nu+1) \le \delta (m-\vartheta+1)$, since $\vartheta \in [1,\nu]$. Finally, we can  combine the outer and the inner $\max$ to arrive at
\begin{multline*}
\mathsf{P} \biggl[ D^h(m) > \max_{\vartheta \in [1,m]} \{A^{h-1}(\vartheta) + \rho_S (m-\vartheta+2) \\ + \delta (m-\vartheta+1) \} + \delta + 2\tau_S \biggr] \le 2\varepsilon_{stg}^{\delta}(\tau_S) .
\end{multline*}
Repeating the same steps, we obtain by recursive insertion of~\eqref{eq:singleforkjoinsystempathwise} for $m \ge 1$ that}\cond{short}{By recursive insertion of~\eqref{eq:singleforkjoinsystempathwise} we obtain for $m \ge 1$ that}
\begin{multline}
\mathsf{P} \biggl[ D^h(m) > \max_{\nu \in [1,m]} \{ A^1(\nu) + \rho_{S} (m-\nu+h) \\ + (h-1)\delta (m-\nu+1) \} + \delta + h \tau_S \biggr] \le h \varepsilon_{stg}^{\delta}(\tau_S) .
\label{eq:networkservicecurve}
\end{multline}
The end-to-end sojourn time of the network follows as before by insertion of~\eqref{eq:networkservicecurve} into $T^{e2e}(m) = D^h(m) - A^1(m)$.
%
%
\paragraph*{GI$\mid$GI$\mid$1 servers}
We derive $\varepsilon_{stg}^{\delta}(\tau_S)$ for a fork-join system with $k$ servers each with iid service times to show how bounds of $T^{e2e}(m)$ scale with $k$ and $h$. With Cor.~\ref{cor:errorprofiles} we have $\varepsilon_{stg}(\tau_S) = k e^{-\theta_S\tau_S}$ so that~\eqref{eq:samplepathservicecurveerrorprofile} can be estimated as
\cond{long}{
\begin{align}
\varepsilon_{stg}^{\delta} (\tau_S) & = \sum_{n=1}^m k e^{-\theta_S(\tau_S+\delta(m-n+1))} \nonumber \\
& \le k e^{-\theta_S\tau_S} \sum_{n=1}^\infty e^{-\theta_S\delta n} \nonumber \nonumber \\
& \le k e^{-\theta_S\tau_S} \int_{0}^{\infty} e^{-\theta_S\delta y} \mathrm{d}y \nonumber \\
& = \frac{k e^{-\theta_S\tau_S}}{\theta_S\delta},
\label{eq:samplepathservicecurveerrorprofilesolution}
\end{align}
where we used that $e^{-\theta_S\delta n} \le \int_{n-1}^n e^{-\theta_S\delta y} \mathrm{d}y$ for $n \ge 1$ since $e^{-\theta_S\delta y}$ is a decreasing function in $y \ge 0$.}
\cond{short}{
\begin{equation}
\varepsilon_{stg}^{\delta} (\tau_S) \le \int_{0}^{\infty} k e^{-\theta_S(\tau_S + \delta y)} \mathrm{d}y = \frac{k e^{-\theta_S\tau_S}}{\theta_S\delta}.
\label{eq:samplepathservicecurveerrorprofilesolution}
\end{equation}
}
Next, we define a constant $\beta > 0$ where we let $\delta = \beta/h$ and insert~\eqref{eq:samplepathservicecurveerrorprofilesolution} into~\eqref{eq:networkservicecurve}.
\cond{long}{It follows that
\begin{equation*}
h \varepsilon_{stg}^{\delta}(\tau_S) = e^{-\theta_S\left(\tau_S - \frac{1}{\theta_S}\ln\frac{h^2 k}{\theta_S\beta} \right)} = e^{-\theta_S z},
\end{equation*}
where we define $z$ so that $\tau_S = z + \frac{1}{\theta_S}\ln\frac{h^2 k}{\theta_S\beta}$. By insertion of this expression into~\eqref{eq:networkservicecurve} and using another variable substitution, we arrive at}\cond{short}{After some reformulation we obtain}
\begin{multline}
\mathsf{P} \biggl[ D^h(m) > \max_{\nu \in [1,m]} \{ A^1(\nu) + (\rho_{S}(\theta_S)+\beta) (m-\nu) \} + \beta \\ + h \biggl(\rho_S(\theta_S) + \tau_S + \frac{1}{\theta_S} \ln\frac{h^2 k}{\theta_S\beta}\biggr)\biggr] \le e^{-\theta_S \tau_S} .
\label{eq:networkservicecurvescaling}
\end{multline}
\cond{long}{A statistical upper bound of the end-to-end sojourn time $T^{e2e}(m) = D^h(m) - A^1(m)$ follows with~\eqref{eq:networkservicecurvescaling} as
\begin{multline*}
\mathsf{P} \biggl[T^{e2e}(m) > \max_{\nu \in [1,m]} \{ (\rho_{S}(\theta_S)+\beta) (m-\nu) - A^1(\nu,m)\} + \beta \\ + h \biggl(\rho_S(\theta_S) + \tau_S + \frac{1}{\theta_S} \ln\frac{h^2 k}{\theta_S\beta}\biggr)\biggr] \le e^{-\theta_S \tau_S} .
\end{multline*}
Considering arrivals with envelope $\rho_A(-\theta_A)$ as in Def.~\ref{def:envelopes} and Cor.~\ref{cor:errorprofiles}, it follows for $\rho_S(\theta_S)+\beta \le \rho_A(-\theta_A)$ that
\begin{multline*}
\mathsf{P} \biggl[T^{e2e}(m) > \beta + \tau_A + h \biggl(\rho_S(\theta_S) + \tau_S + \frac{1}{\theta_S} \ln\frac{h^2 k}{\theta_S\beta}\biggr)\biggr] \\ \le e^{-\theta_A \tau_A} + e^{-\theta_S \tau_S} .
\end{multline*}
which grows with $\mathcal{O}(h \ln (h k))$.}\cond{short}{Considering arrivals with envelope $\rho_A$ as in Def.~\ref{def:envelopes}, it follows from~\eqref{eq:networkservicecurvescaling} for $\rho_S(\theta_S)\!+\!\beta \!\le\! \rho_A$ that the end-to-end sojourn time $T^{e2e}(m) = D^h(m) - A^1(m)$ has a statistical upper bound that grows with $\mathcal{O}(h \ln (h k))$.
}
The result extends the fundamental scaling $\mathcal{O}(h \ln h)$ for $h$ tandem servers from~\cite{ciucu:networkservicecurvescaling2} to $h$ tandem fork-join stages each with $k$ parallel servers. It holds without assuming independent servers within or across stages.
%
%
\section{Conclusions}
\label{sec:conclusions}
We presented a general model of fork-join systems in max-plus system theory and derived essential solutions for $k$ parallel G$\mid$G$\mid$1 servers, that generalize previous $\ln k$ scaling results. Beyond standard fork-join systems, we considered parallel servers with resequencing, where each server has a thinned arrival process. We included applications to load balancing that showed the significant potential but also revealed fundamental limitations. \cond{long}{Our results showed that deterministic thinning is superior to random thinning.} Taking advantage of an envelope-based approach, we contributed solutions for advanced $(k,l)$ fork-join systems and demonstrated the substantial benefit of using independent redundant servers. We provided insights into the configuration of redundant servers versus load-balancing. We extended the results to multi-stage fork-join networks with $h$ stages, where we discovered a scaling with $\mathcal{O}(h \ln (h k))$.
%
%
\section{Appendix}
\label{sec:appendix}
\begin{proof}[Proof of Th.~\ref{th:gg1}]
We only show the proof of the sojourn time, as the proof of the waiting time follows similarly.
\paragraph*{G$\mid$G$\mid$1 servers}
We obtain from~\eqref{eq:sojourntime} for $\theta > 0$ that
\begin{equation*}
\mathsf{E}[e^{\theta T(n)}] \le \sum_{\nu=1}^n \mathsf{E}[e^{\theta S(\nu,n)}] \mathsf{E}[e^{-\theta A(\nu,n)}] ,
\end{equation*}
where we estimated the maximum by the sum of its arguments and used the statistical independence of arrivals and service. By insertion of the $(\sigma,\rho)$-constraints from Def.~\ref{def:sigmarho} we have
\begin{equation*}
\mathsf{E}[e^{\theta T(n)}] \! \le \! e^{\theta (\sigma_A(-\theta) + \sigma_S(\theta) + \rho_S(\theta))} \! \sum_{\nu=1}^n \! e^{-\theta (\rho_A(-\theta) - \rho_S(\theta)) (n-\nu)} .
\end{equation*}
Next, we estimate
\begin{align*}
\sum_{\nu=1}^n e^{-\theta (\rho_A(-\theta)-\rho_S(\theta)) (n-\nu)} & \le \sum_{\nu=0}^{\infty} \bigl(e^{-\theta (\rho_A(-\theta) - \rho_S(\theta))}\bigr)^{\nu} \\
& = \frac{1}{1-e^{-\theta (\rho_A(-\theta) - \rho_S(\theta))}},
\end{align*}
where we used the geometric sum for $\rho_S(\theta) < \rho_A(-\theta)$. By use of Chernoff's bound $\mathsf{P}[X \ge x] \le e^{-\theta x} \mathsf{E}[e^{\theta X}]$ we obtain
\begin{equation*}
\mathsf{P}[T(n) \ge \tau] \le \frac{e^{\theta (\sigma_A(-\theta) + \sigma_S(\theta))}}{1-e^{-\theta (\rho_A(-\theta) - \rho_S(\theta))}} e^{\theta \rho_S(\theta)} e^{-\theta\tau}.
\end{equation*}
\paragraph*{GI$\mid$GI$\mid$1 servers}
From~\eqref{eq:sojourntime} we have
\begin{equation*}
T(n) = \max_{m \in [1,n]} \{S(n-m+1,n) - A(n-m+1,n) \} .
\end{equation*}
For $\theta > 0$ we can write
\begin{multline*}
\mathsf{P}[T(n) > \tau ] \\ = \mathsf{P}\left[\max_{m \in [1,n]} \left\{e^{\theta (S(n-m+1,n) - A(n-m+1,n))} \right\} > e^{\theta\tau} \right] .
\end{multline*}
Now consider the process
\begin{equation*}
U(m) = e^{\theta (S(n-m+1,n) - A(n-m+1,n))} .
\end{equation*}
By definition of $A(m,n)$~\eqref{eq:arrivalincrements} and $S(m,n)$~\eqref{eq:serviceincrements} it follows that
\begin{equation*}
U(m+1) = U(m) e^{\theta (S(n-m) - A(n-m,n-m+1))} .
\end{equation*}
The conditional expectation can be computed as
\begin{align*}
&\mathsf{E}[U(m+1) | U(m), U(m-1),\dots,U(1)] \\
=& U(m) \mathsf{E}[ e^{\theta S(n-m)}] \mathsf{E}[ e^{-\theta  A(n-m,n-m+1)}] ,
\end{align*}
where we used the statistical independence of the inter-arrival and service times. If $\rho_S(\theta) \le \rho_A(-\theta)$, it holds that $\mathsf{E}[ e^{\theta S(n-m)}] \mathsf{E}[ e^{-\theta  A(n-m,n-m+1)}] \le 1$ and
\begin{equation*}
\mathsf{E}[U(m+1) | U(m), U(m-1),\dots,U(1)] \le U(m),
\end{equation*}
i.e., $U(m)$ is a supermartingale. By application of Doob's inequality for submartingales~\cite[Theorem 3.2, p. 314]{doob:stochasticprocesses} and the formulation for supermartingales~\cond{short}{\cite{jiang:onecoin}}\cond{long}{\cite{jiang:onecoin, jiang:noteonsnetcalc}} we have for non-negative $U(m)$ for $m \ge 1$ that
\begin{equation}
x \mathsf{P} \left[ \max_{m \in [1,n]} \{ U(m) \} \ge x \right] \le \mathsf{E}[U(1)] .
\label{eq:martingalebound}
\end{equation}
We derive
\begin{equation*}
\mathsf{E}[U(1)] = \mathsf{E}[e^{\theta (S(n,n) - A(n,n))}] = \mathsf{E}[e^{\theta S(1)}].
\end{equation*}
Letting $x = e^{\theta \tau}$ we have from~\eqref{eq:martingalebound} that
\begin{equation*}
\mathsf{P} \left[ T(n) \ge \tau \right] \le e^{\theta \rho_S(\theta)} e^{-\theta\tau} .
\end{equation*}
Finally, using the union bound to estimate~\eqref{eq:nonblockingsojourntime} gives Th.~\ref{th:gg1}.
\end{proof}
%
%
\cond{long}{
\begin{proof}[Proof of Cor.~\ref{cor:errorprofiles}]
We only show the proof of the arrival envelope as the proof of the service envelope follows similarly.
\paragraph*{GI$\mid$GI$\mid$1 servers}
For $\theta > 0$ we can rewrite Def.~\ref{def:envelopes} as
\begin{align*}
& \mathsf{P}\left[ \max_{\nu \in [1,n]} \{\rho_A(n-\nu) - A(\nu,n) \} > \tau_A \right] \\
= & \mathsf{P}\left[ \max_{m \in [1,n]} \bigl\{e^{\theta (\rho_A(m-1) - A(n-m+1,n))}\bigr\} > e^{\theta\tau_A} \right] .
\end{align*}
Now consider the process
\begin{equation*}
U(m) = e^{\theta (\rho_A(m-1) - A(n-m+1,n))}.
\end{equation*}
By definition of $A(m,n)$~\eqref{eq:arrivalincrements} it follows that
\begin{equation*}
U(m+1) = U(m) e^{\theta (\rho_A - A(n-m,n-m+1))} ,
\end{equation*}
which has conditional expectation
\begin{align*}
& \mathsf{E}[U(m+1)| U(m), U(m-1), \dots, U(1)] \\
= & U(m) e^{\theta \rho_A} \mathsf{E}[e^{-\theta A(n-m,n-m+1)}] .
\end{align*}
By substitution of $\rho_A = \rho_A(-\theta)$ from~\eqref{eq:arrivalparameter} it follows that $\mathsf{E}[U(m+1)| U(m), U(m-1), \dots, U(1)] = U(m)$ is a martingale and hence also a supermartingale. With~\eqref{eq:martingalebound}, $x=e^{\theta \tau_A}$, and since $\mathsf{E}[U(1)]=1$ it holds that
\begin{equation*}
\mathsf{P}\left[ \max_{\nu \in [1,n]} \{\rho_A(n-\nu) - A(\nu,n) \} \ge \tau_A \right] \le e^{-\theta \tau_A} ,
\end{equation*}
which proves Cor.~\ref{cor:errorprofiles}.
\end{proof}
}
%
%
\balance
\bibliographystyle{IEEEtran}
\bibliography{IEEEabrv,ParallelSystems}
\end{document}